\numberwithin{equation}{section}
\newtheorem{theorem}{Theorem}[section]
\newtheorem{lemma}[theorem]{Lemma}
\newtheorem{proposition}[theorem]{Proposition}
\newtheorem{rem}[theorem]{Remark}
\newcommand{\ind}{\mathbf{1}}
\newcommand{\R}{\mathbb{R}}
\newcommand{\Z}{\mathbb{Z}}
\newcommand{\N}{\mathbb{N}}
\renewcommand{\tilde}{\widetilde}
\renewcommand{\hat}{\widehat}
\renewcommand{\bar}{\widebar}
\renewcommand{\check}{\widecheck}
\DeclareMathSymbol{\leqslant}{\mathalpha}{AMSa}{"36} 
\DeclareMathSymbol{\geqslant}{\mathalpha}{AMSa}{"3E} 
\DeclareMathSymbol{\eset}{\mathalpha}{AMSb}{"3F}     
\newcommand{\dd}{\,\text{\rm d}}             
\newcommand{\sumtwo}[2]{\sum_{\substack{#1 \\ #2}}} 
\DeclareMathOperator{\var}{\mathrm{Var}}
\newcommand{\bbP}{{\ensuremath{\mathbb P}} }
\newcommand{\gb}{\beta}
\newcommand{\gga}{\gamma}            
\newcommand{\gd}{\delta}
\newcommand{\gep}{\varepsilon}       
\newcommand{\go}{\omega}
\newcommand{\gl}{\lambda}
\def\captionfont@{\footnotesize}
\def\captionheadfont@{\scshape}
\long\def\@makecaption#1#2{%
  \vspace{2mm}
  \setbox\@tempboxa\vbox{\color@setgroup
    \advance\hsize-6pc\noindent
    \captionfont@\captionheadfont@#1\@xp\@ifnotempty\@xp
        {\@cdr#2\@nil}{.\captionfont@\upshape\enspace#2}%
    \unskip\kern-6pc\par
    \global\setbox\@ne\lastbox\color@endgroup}%
  \ifhbox\@ne 
    \setbox\@ne\hbox{\unhbox\@ne\unskip\unskip\unpenalty\unkern}%
  \fi
  \ifdim\wd\@tempboxa=\z@ 
    \setbox\@ne\hbox to\columnwidth{\hss\kern-6pc\box\@ne\hss}%
  \else 
    \setbox\@ne\vbox{\unvbox\@tempboxa\parskip\z@skip
        \noindent\unhbox\@ne\advance\hsize-6pc\par}%
\fi
  \ifnum\@tempcnta<64 
    \addvspace\abovecaptionskip
    \moveright 3pc\box\@ne
  \else 
    \moveright 3pc\box\@ne
    \nobreak
    \vskip\belowcaptionskip
  \fi
\relax
}
\def\writefig#1 #2 #3 {\rlap{\kern #1 truecm
\raise #2 truecm \hbox{#3}}}
\newcommand{\tf}{\textsc{f}}
\newcommand{\tr}{\text{Trace}}
\title[Free energy of directed polymers in dimension $1+1$ and $1+2$]{New bounds for the free energy of directed polymers in dimension $1+1$ and $1+2$}
\author{Hubert Lacoin}
\address{
  Universit{\'e} Paris Diderot and Laboratoire de Probabilit{\'e}s et Mod\`eles Al\'eatoires (CNRS U.M.R. 7599),
U.F.R.                Math\'ematiques, Case 7012 (Site Chevaleret),
                75205 Paris cedex 13, France
}
\email{lacoin\@@math.jussieu.fr}
\begin{document}
\maketitle
\begin{abstract}
We study the free energy of the directed polymer in random environment model in dimension $1+1$ and $1+2$.
For dimension one, we improve the statement of Comets and Vargas in \cite{CV} concerning very strong disorder by giving sharp estimates on the free energy at high temperature. In dimension two, we prove that very strong disorder holds at all temperatures, thus solving a long standing conjecture in the field.
  \\
  2000 \textit{Mathematics Subject Classification: 82D60, 60K37, 82B44}
  \\
  \\
  \textit{Keywords: Free Energy, Directed Polymer, Strong Disorder, Localization, Fractional Moment Estimates, Quenched Disorder, Coarse Graining.}
\end{abstract}

\section{Introduction}
\subsection{The model}
We study a directed polymer model introduced by Huse and Henley (in dimension $1+1$) \cite{HH} with the purpose of investigating  impurity-induced domain-wall roughening in the 2D-Ising model. The first mathematical study of directed polymers in random environment was made by Imbrie and Spencer \cite{IS}, and was followed by numerous authors \cite{IS, B, AZ, SZ, CH_ptrf, CSY, CY, CH_al, CV, V} (for a review on the subject see \cite{CSY_rev}). 
Directed polymers in random environment model, in particular, polymer chains in a solution with impurities.

In our set--up the polymer chain is the graph $\{(i,S_i)\}_{1\le i\le N}$ of a nearest--neighbor path in $\Z^d$, $S$ starting from zero.
 The equilibrium behavior of this chain is described by a measure on the set of paths:
the impurities enter the definition of the measure as {\sl disordered potentials}, given by a typical realization of a field of i.i.d.\ random variables $\go=\{\go_{(i,z)}\ ;\ i\in \N, z\in \Z^d\}$ (with associated law $Q$). The polymer chain will tend to be attracted by larger values of the environment and repelled by smaller ones.
More precisely, we define the Hamiltonian 
\begin{equation}
 H_N(S):=\sum_{i=1}^N \go_{i,S_i}.
\end{equation}
We denote by $P$ the law of the simple symmetric random walk on $\Z^d$ starting at $0$ (in the sequel $P f(S)$, respectively $Q g(\go)$, will denote the expectation with respect to $P$, respectively Q).
One defines the polymer measure of order $N$ at inverse temperature $\gb$ as
\begin{equation}
 \mu^{(\gb)}_N(S)=\mu_N(S):=\frac{1}{Z_N}\exp\left(\gb H_N(S)\right)P(S),
\end{equation}
where $Z_N$ is the normalization factor which makes $\mu_N$ a probability measure
\begin{equation}
 Z_N:=P \exp\left(\gb H_N(S)\right).
\end{equation}
We call $Z_N$ the {\sl partition function} of the system. In the sequel, we will consider the case of $\go_{(i,z)}$ with zero mean and unit variance and such that there exists $B \in(0,\infty]$ such that
\begin{equation}\label{expm}
 \gl(\gb)=\log Q \exp(\gb\go_{(1,0)})<\infty, \quad \text{ for } 0\le \gb \le B. 
\end{equation}
Finite exponential moments are required to guarantee that $Q Z_N<\infty$.
The model can be defined and it is of interest also with environments with heavier tails (see e.g.\ \cite{V}) but we will not consider these cases here.

\subsection{Weak, strong and very strong disorder}

In order to understand the role of disorder in the behavior of $\mu_N$, as $N$ becomes large, let us observe that, when $\gb=0$, $\mu_N$ is the law of the simple random walk, so that we know that, properly rescaled, the polymer chain will look like the graph of a $d$-dimensional Brownian motion. The main questions that arise for our model for $\gb>0$ are whether or not the presence of disorder breaks the diffusive behavior of the chain for large $N$, and how the polymer measure looks like when diffusivity does not hold.
\medskip

Many authors have studied diffusivity in polymer models: in \cite{B}, Bolthausen remarked that the renormalized partition function $W_N:=Z_N/(Q Z_N)$ has a martingale property and proved the following zero-one law:
\begin{equation}
 Q\left\{\lim_{N\to\infty} W_N=0\right\}\in\{0,1\}.
\end{equation}
A series of paper  \cite{IS,B,AZ,SZ,CY} lead to
\begin{equation}
  Q\left\{\lim_{N\to\infty} W_N=0\right\}=0 \Rightarrow  \text{ diffusivity },
\end{equation}
and a consensus in saying that this implication is an equivalence.
For this reason, it is natural and it has become customary to say that {\sl weak disorder} holds when $W_N$ converges to some non-degenerate limit and that {\sl strong disorder} holds when $W_N$ tends to zero.
\medskip

Carmona and Hu \cite{CH_ptrf} and Comets, Shiga and Yoshida \cite{CSY} \ proved that strong disorder holds for all $\gb$ in dimension $1$ and $2$. The result was completed by Comets and Yoshida \cite{CY}: we summarize it here

\medskip
\begin{theorem}\label{strdis}
 There exists a critical value $\gb_c=\gb_c(d)\in[0,\infty]$ (depending of the law of the environment) such that
\begin{itemize}
 \item Weak disorder holds when $\gb<\gb_c$.
 \item Strong disorder holds when $\gb>\gb_c$.
\end{itemize}
Moreover:
\begin{equation}\begin{split}
 \gb_c(d)&=0 \text{ for } d=1,2\\
 \gb_c(d)&\in (0,\infty] \text{ for } d\ge 3. 
\end{split}
\end{equation}
\end{theorem}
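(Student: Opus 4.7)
The starting data are that $W_N := Z_N/(QZ_N)$ is a non-negative mean-one $Q$-martingale (for each $\gb \in [0,B)$) with respect to the environment filtration $\cF_N = \gs(\go_{(i,z)} : i \leq N)$, so $W_N \to W_\infty \in [0,\infty)$ $Q$-a.s.\ by martingale convergence, and Bolthausen's zero-one law stated in the excerpt gives $Q(W_\infty = 0) \in \{0,1\}$. My plan is to (i) prove monotonicity of weak disorder in $\gb$, which lets me define $\gb_c$, and then to show (ii) $\gb_c > 0$ for $d \geq 3$ by a second-moment estimate and (iii) $\gb_c = 0$ for $d = 1,2$ via the replica-overlap characterization combined with recurrence of the simple random walk.

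For step (i) I would set
\begin{equation*}
\gb_c := \sup\bigl\{\gb \in [0,B) : W_\infty(\gb) > 0 \ Q\text{-a.s.}\bigr\}
\end{equation*}
and show that if weak disorder holds at some $\gb_1$ then it holds at every $\gb_0 \in [0,\gb_1]$. The natural route is a coupling between the polymer measures at $\gb_0$ and $\gb_1$ driven by the same environment $\go$, together with the size-biased representation of $W_N$ as the density of a new measure on path space. A convenient concrete realization is to control fractional moments $Q W_N(\gb)^\theta$ and show they are essentially monotone along the coupling; alternatively one can differentiate $\gb \mapsto Q\log W_N(\gb)$ and identify the derivative with a non-positive replica-overlap quantity whose sign gives the desired comparison. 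I expect this to be the main obstacle: the map $\gb \mapsto Q \log Z_N(\gb)$ is convex, but $\gb \mapsto Q \log W_N(\gb) = Q \log Z_N(\gb) - N\gl(\gb)$ is a difference of convex functions, so elementary convexity does not yield the inequality, and one has to exploit the martingale structure honestly.

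For step (ii), a direct expansion using independence of the $\go$'s gives
\begin{equation*}
Q W_N^2 = P^{\otimes 2} \exp\!\Big( (\gl(2\gb) - 2\gl(\gb)) \sum_{i=1}^N \ind_{\{S_i = S_i'\}} \Big),
\end{equation*}
where $S, S'$ are independent simple random walks. For $d \geq 3$ both walks are transient, so the intersection local time $L_\infty := \sum_{i \geq 1} \ind_{\{S_i = S_i'\}}$ is a.s.\ finite and has finite exponential moments up to a positive threshold. Since $\gl(2\gb) - 2\gl(\gb) = O(\gb^2)$ as $\gb \to 0$, for $\gb$ small enough the right-hand side is bounded uniformly in $N$, which gives $L^2$-boundedness of the martingale, hence $L^1$-convergence, hence $Q W_\infty = 1 > 0$; this shows $\gb_c(d) > 0$ in $d \geq 3$.

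For step (iii), in $d = 1,2$ I would use the Comets--Shiga--Yoshida characterization $\{W_\infty > 0\} = \{\sum_n I_n < \infty\}$ $Q$-a.s., where $I_n := \mu_{n-1}^{\otimes 2}(S_n = S_n')$ is the replica overlap, together with a lower bound of the form $Q[-\log W_N] \geq c(\gb)\, Q[\sum_{n \leq N} I_n]$ valid for $\gb > 0$. Hence weak disorder forces $\sum_n Q[I_n] < \infty$; but in $d = 1,2$ the recurrence of the simple random walk (already the annealed identity $Q[I_n] \geq c(\gb) P^{\otimes 2}(S_n = S_n')$ suffices) makes this sum diverge for any $\gb > 0$, contradicting weak disorder and yielding $\gb_c(d) = 0$. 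Combining the three steps gives the theorem.
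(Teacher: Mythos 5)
Theorem \ref{strdis} is stated in the paper as background and attributed to \cite{CH_ptrf}, \cite{CSY} and \cite{CY}; it is not proved in this paper, so there is no in-paper proof to compare against. On its own merits: step (ii), the $L^2$ (second-moment) argument for $d \geq 3$, is correct and standard. Step (i), however, is a description of an obstacle rather than a proof. Monotonicity in $\gb$ of the weak-disorder regime is exactly the nontrivial content of the Comets--Yoshida result being cited; you flag it yourself as ``the main obstacle,'' and the alternatives you mention (differentiating $Q\log W_N(\gb)$, comparing fractional moments along a coupling) do not produce a sign by any elementary manipulation. Until this is filled in, defining $\gb_c$ as the supremum of an interval is not justified.

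Step (iii) has a concrete logical gap. From weak disorder and the Comets--Shiga--Yoshida characterization you correctly get $\sum_n I_n < \infty$ $Q$-a.s.\ with $I_n := \mu_{n-1}^{\otimes 2}(S_n^{(1)} = S_n^{(2)})$, but you then write ``hence weak disorder forces $\sum_n Q[I_n] < \infty$.'' That implication is false: almost-sure finiteness of a non-negative series does not give finiteness of the sum of expectations. To argue at the level of expectations you would need $Q[-\log W_N]$ to stay bounded, which in turn requires uniform integrability of $\log W_N$, and this does not follow merely from $W_N \to W_\infty > 0$ a.s. Separately, the bound $Q[I_n] \geq c(\gb)\, P^{\otimes 2}(S_n = S_n')$, which you label an ``annealed identity,'' is neither an identity nor obvious: $I_n$ carries the random normalization $W_{n-1}^2$ in its denominator, and bounding it below by the free-walk return probability is essentially the claim that the polymer endpoint's return probability is comparable to the simple walk's, a diffusivity-type statement you are not entitled to assume in the middle of proving the weak/strong dichotomy. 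The proofs of strong disorder for $d=1,2$ in \cite{CH_ptrf} and \cite{CSY} proceed by a genuinely different route.
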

We mention also that the case $\gb_c(d)=\infty$ can only occur when the random variable $\go_{(0,1)}$ is bounded.
\medskip

In \cite{CH_ptrf} and \cite{CSY} a characterization of strong disorder has been obtained in term of localization of the polymer chain: we cite the following result \cite[Theorem 2.1]{CSY}
\medskip

\begin{theorem}
If $S^{(1)}$ and $S^{(2)}$ are two i.i.d.\ polymer chains, we have
\begin{equation}
 Q\left\{ \lim_{N\rightarrow\infty} W_N=0 \right\}=Q\left\{\sum_{N\ge 1} \mu_{N-1}^{\otimes 2}(S_N^{(1)}=S_N^{(2)})=\infty \right\}
\end{equation}
Moreover if   $Q\{ \lim_{N\rightarrow\infty} W_N=0 \}=1$ there exists a constant $c$ (depending on $\gb$ and the law of the environment) such that for 
\begin{equation}\label{localization}
-c\log W_N \le \sum_{n= 1}^N \mu_{n-1}^{\otimes 2}(S_n^{(1)}=S_n^{(2)})\le -\frac{1}{c} \log W_N.
\end{equation}
\end{theorem}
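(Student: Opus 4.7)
The plan is to write $\log W_N$ as the sum of its Doob compensator $A_N$ and a mean-zero $Q$-martingale $V_N$, and to show, via a second-moment computation on the environment, that $-A_N$ is comparable --- with multiplicative constants depending only on $\gb$ and the law of $\go$ --- to $\sum_{n=1}^N \mu_{n-1}^{\otimes 2}(S_n^{(1)}=S_n^{(2)})$; the martingale $V_N$ will then be negligible compared to $|A_N|$ on the event where the latter diverges. Concretely, set $\cF_n:=\sigma(\go_{i,z}:i\le n,z\in\Z^d)$ and
\[
M_n:=\mu_{n-1}\!\left[\exp\!\bigl(\gb\go_{n,S_n}-\gl(\gb)\bigr)\right],
\]
so that $Z_n/Z_{n-1}=e^{\gl(\gb)}M_n$, hence $W_N=\prod_{n=1}^N M_n$. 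Integrating out the $n$-th layer of the environment gives $Q[M_n\mid\cF_{n-1}]=1$. Setting $A_N:=\sum_{n=1}^N Q[\log M_n\mid\cF_{n-1}]\le 0$ (by Jensen) and $V_N:=\log W_N-A_N$ yields the Doob decomposition.

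For the comparison with the collision probability, write $q_n(z):=\mu_{n-1}(S_n=z)$, so that $I_n:=\mu_{n-1}^{\otimes 2}(S_n^{(1)}=S_n^{(2)})=\sum_z q_n(z)^2$ and $M_n=\sum_z q_n(z)Y_n(z)$ with $Y_n(z):=\exp(\gb\go_{n,z}-\gl(\gb))$ i.i.d., mean $1$, independent of $\cF_{n-1}$. A direct computation gives $\var_Q(M_n\mid\cF_{n-1})=I_n\,(e^{\gl(2\gb)-2\gl(\gb)}-1)$. Introducing $\Psi(x):=x-1-\log x\ge 0$ and using $Q[M_n-1\mid\cF_{n-1}]=0$, one has $-Q[\log M_n\mid\cF_{n-1}]=Q[\Psi(M_n)\mid\cF_{n-1}]$. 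The key quantitative estimate is then
\[
c^{-1}\,I_n\;\le\; Q[\Psi(M_n)\mid\cF_{n-1}]\;\le\; c\,I_n
\]
for some $c=c(\gb)\in(0,\infty)$. The upper bound comes from $\Psi(x)\le C(\gb)(x-1)^2$ on a neighbourhood of $1$, together with control of the exponential-moment tails via \eqref{expm}; the lower bound splits the conditional expectation according to whether $|M_n-1|$ is small, using the uniform convexity $\Psi(x)\ge\gd(x-1)^2$ near $x=1$ on the one hand and, on the complement, a lower bound on the large-deviation probability $Q(|M_n-1|\ge 1/2\mid\cF_{n-1})$ obtained from the second-moment computation via a Paley--Zygmund argument.

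Summing the preceding estimate yields $-A_N\asymp\sum_{n=1}^N I_n$. Since $W_N$ is a nonnegative $Q$-martingale, it converges $Q$-almost surely to some $W_\infty\in[0,\infty)$. On $\{\lim W_N=0\}$ we have $-\log W_N\to\infty$; since the same moment bounds give $\langle V\rangle_N=O(|A_N|)$, Chow's martingale strong law of large numbers with normalisation $B_N:=|A_N|$ yields $V_N=o(|A_N|)$ a.s.\ on $\{|A_N|\to\infty\}$, hence $-\log W_N\sim -A_N\asymp\sum_n I_n$, which is \eqref{localization}. Conversely, on $\{\sum_n I_n<\infty\}$ the martingale $V_N$ is $L^2$-bounded and converges, $A_N$ also converges, and therefore $\log W_N$ converges to a finite limit; consequently $W_\infty>0$. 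This gives the almost-sure equivalence of the two events.

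The main obstacle is the lower bound in the key estimate. When $q_n$ is neither a point mass nor uniformly diffuse, the random weights $Y_n(z)$ can interfere destructively and bring $M_n$ close to $1$ even when $I_n$ is of order one, so one cannot simply Taylor-expand $\log$. The Paley--Zygmund route sketched above requires a uniform bound on a higher moment of $M_n$ (controlled by \eqref{expm} and the i.i.d.\ structure of the $Y_n(z)$'s), and the resulting constants must be kept uniform in the shape of $q_n$ on $\Z^d$; once this point is secured, Step~3 is a matter of standard discrete-time martingale theory.
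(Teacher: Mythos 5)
This theorem is not proved in the paper: it is quoted verbatim from Comets--Shiga--Yoshida (the paper cites it as [Theorem~2.1] of \cite{CSY}), so there is no internal proof to compare against. Your Doob-decomposition strategy ($\log W_N=A_N+V_N$ with $A_N$ the compensator, combined with the two-sided bound $-Q[\log M_n\mid\cF_{n-1}]\asymp I_n$) is in fact the strategy of the original CSY proof, and your identification of the key lemma and of the role of the conditional variance is correct. Two cautions on the compressed steps. For the upper bound $Q[\Psi(M_n)\mid\cF_{n-1}]\le c\,I_n$, the dangerous tail is $M_n\to 0$, where $\Psi(M_n)\approx-\log M_n$ blows up; the clean way to handle it is the Jensen inequality $-\log M_n\le\gl(\gb)-\gb\sum_z q_n(z)\go_{n,z}$ together with $Q(M_n<1/2\mid\cF_{n-1})\le 4\var_Q(M_n\mid\cF_{n-1})=O(I_n)$ and a term-by-term expansion of the square, which does give $O(I_n)$ (a naive Cauchy--Schwarz only gives $O(I_n^{1/2})$, which would not feed into the SLLN step). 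For the lower bound, the statement ``a lower bound on $Q(|M_n-1|\ge 1/2\mid\cF_{n-1})$ by Paley--Zygmund'' should be made conditional: when $I_n$ is small, $Q(|M_n-1|\ge 1/2\mid\cF_{n-1})$ can be exponentially small in $1/I_n$, so Paley--Zygmund at the fixed level $1/2$ is vacuous. What works is the dichotomy: either the truncated piece $Q[(M_n-1)^2\ind_{|M_n-1|\le 1/2}\mid\cF_{n-1}]$ already carries at least half of $\var_Q(M_n\mid\cF_{n-1})$, in which case the convexity of $\Psi$ near $1$ finishes, or the tail carries at least half, in which case a reverse Cauchy--Schwarz using the conditional fourth moment $Q[R_n^4\mid\cF_{n-1}]\le C I_n$ (which is where \eqref{expm} with enough room is used) yields $Q(|M_n-1|\ge 1/2\mid\cF_{n-1})\ge c\,I_n$. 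With those two points made precise, your argument is essentially the CSY proof.
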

\medskip
One can notice that \eqref{localization} has a very strong meaning in term of trajectory localization when $W_N$ decays exponentially:
it implies that two independent polymer chains tend to share the same endpoint with positive probability.
For this reason we introduce now the notion of free energy, we refer to \cite[Proposition 2.5]{CSY} and \cite[Theorem 
3.2]{CY} for the following result:
\medskip
\begin{proposition}
 The quantity
\begin{equation}
 p(\gb):=\lim_{N\to\infty}\frac{1}{N}\log W_N \ ,
\end{equation}
	exists $Q$-a.s., it is non-positive and non-random. We call it the {\sl free energy} of the model,
 and we have
\begin{equation}\label{Pn}
 p(\gb)=\lim_{N\to\infty}\frac{1}{N}Q\log W_N=:\lim_{N\to\infty} p_N(\gb).
\end{equation}
Moreover $p(\gb)$ is non-increasing in $\gb$.
\end{proposition}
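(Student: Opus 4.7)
My plan is to prove the four assertions---existence of the limit in expectation, almost-sure convergence to the same constant, non-positivity, and monotonicity in $\gb$---by combining a superadditivity estimate with a concentration argument and a one-site covariance computation.

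First, I would derive the superadditivity $Q\log W_{N+M}\geq Q\log W_N+Q\log W_M$. Splitting the walk at time $N$ by the Markov property gives
$$Z_{N+M}\;=\;Z_N\sum_{x\in\Z^d}\mu_N(S_N=x)\,Z_M^{(x,N)},$$
where $Z_M^{(x,N)}$ is the partition function of an independent $M$-step walk started from $x$, built from the shifted environment $\{\go_{N+j,z}\}$. Dividing by $Q Z_{N+M}=e^{(N+M)\gl(\gb)}$, taking $\log$, and applying Jensen's inequality with the probability measure $\mu_N(S_N=\cdot)$ yields
$$\log W_{N+M}\;\geq\;\log W_N+\sum_x\mu_N(S_N=x)\log W_M^{(x,N)}.$$
Averaging over $\go$, and exploiting the independence of $W_M^{(x,N)}$ from $\cF_N:=\sigma(\go_{i,\cdot}:i\leq N)$ together with its equidistribution with $W_M$ (by translation invariance of the environment), delivers the claim. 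Fekete's lemma then produces $p(\gb):=\lim_N Q\log W_N/N=\sup_N Q\log W_N/N$, and a further application of Jensen gives $Q\log W_N\leq\log Q W_N=0$, so $p(\gb)\leq 0$.

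Second, I would upgrade convergence in expectation to almost-sure convergence through concentration of $\log W_N$ around its mean. Viewing $\log W_N$ as a function of the disorder $(\go_{i,z})$ and filtering row by row, a short computation shows that $\log Z_N$ is $\gb$-Lipschitz in each individual coordinate $\go_{i,z}$, uniformly in the others. For a bounded environment, Azuma--Hoeffding applied to the Doob martingale $M_k=Q[\log W_N\mid\cF_k]$ then yields $Q(|\log W_N-Q\log W_N|\geq N\gep)\leq 2e^{-cN\gep^2}$, and Borel--Cantelli delivers $N^{-1}\log W_N\to p(\gb)$ $Q$-a.s. Under only the exponential-moment assumption \eqref{expm}, one truncates the environment at a slowly growing threshold and controls the tail contribution using \eqref{expm}, recovering the same subgaussian concentration. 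Non-randomness of the limit is then automatic. Third, for monotonicity I analyze $L_N(\gb):=Q\log W_N(\gb)$: one checks $L_N(0)=0$ and $L_N'(0)=Q\mu_N^{(0)}(H_N)-N\gl'(0)=0$, so it suffices to show $L_N'(\gb)=Q\mu_N^{(\gb)}(H_N)-N\gl'(\gb)\leq 0$ for $\gb\geq 0$. Fix $(i,x)$, set $u:=\go_{i,x}$, and freeze all other disorder; then $\mu_N^{(\gb)}(S_i=x)=e^{\gb u}/(e^{\gb u}+c)$ for some random $c\geq 0$ measurable with respect to the frozen field. Introducing the tilted law $\tilde Q(du)\propto e^{\gb u}Q(du)$ (which has mean $\gl'(\gb)$), the site-wise inequality $Q[u\,\mu_N^{(\gb)}(S_i=x)]\leq\gl'(\gb)Q\mu_N^{(\gb)}(S_i=x)$ reduces to $\mathrm{cov}_{\tilde Q}(u,\,1/(e^{\gb u}+c))\leq 0$, which follows from the Chebyshev covariance inequality since $u\mapsto u$ is increasing while $u\mapsto 1/(e^{\gb u}+c)$ is decreasing. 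Summing over $x$ and $i$ produces $L_N'(\gb)\leq 0$ for $\gb\geq 0$, and dividing by $N$ and passing to the limit gives $p(\gb')\leq p(\gb)$ whenever $\gb'\geq\gb$.

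The principal obstacle is the concentration step when $\go$ is unbounded: the martingale differences of $\log W_N$ are not uniformly bounded, so Azuma--Hoeffding does not apply verbatim, and one must combine a truncation of the environment with a separate control of the discarded tail using only \eqref{expm}. Everything else amounts to the superadditive inequality, two applications of Jensen, and the covariance computation above.
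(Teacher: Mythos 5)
The paper does not prove this proposition itself; it cites \cite[Proposition 2.5]{CSY} and \cite[Theorem 3.2]{CY}, so there is no in-paper proof to compare against. That said, your proposal is a sound, self-contained argument and is in spirit the same as what those references do: superadditivity of $Q\log W_N$ via the Markov decomposition of $Z_{N+M}$ and Jensen with the random probability vector $\mu_N(S_N=\cdot)$, concentration of $\log W_N$ around its mean for almost-sure convergence, and a site-by-site covariance computation for monotonicity. The superadditivity step and the monotonicity step (reducing to $\mathrm{cov}_{\tilde Q}(u,\,1/(e^{\gb u}+c))\le 0$ with $c\ge 0$ measurable in the frozen disorder and applying the Chebyshev covariance inequality) are both correct.

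One place deserves more care than you give it, and it is slightly mis-stated as written: the per-coordinate bound $\partial_{\go_{i,z}}\log Z_N=\gb\,\mu_N(S_i=z)\in[0,\gb]$ does not by itself feed into Azuma with the right normalization, because the number of relevant coordinates $(i,z)$ grows like $N^{(d+2)/2}$, not $N$. What makes the row-filtration work is the $\ell^1$ bound $\sum_{z}\partial_{\go_{k,z}}\log Z_N=\gb$, which (for $\go$ bounded by $K$) gives an oscillation of $\log Z_N$ in an entire row of at most $2\gb K$, hence bounded martingale increments and a genuine $e^{-cN\gep^2}$ tail. You should state it that way. You correctly flag that for unbounded $\go$ one must truncate and control the tail using \eqref{expm}; note additionally that \eqref{expm} only controls the right tail of $\go$, so the truncation argument (or a variance bound such as $\mathrm{Var}_Q(\log W_N)\le CN$, which the paper itself invokes in Section 7, followed by Borel--Cantelli along a polynomial subsequence plus a deterministic one-step continuity estimate) needs to be run with that asymmetry in mind. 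These are exactly the technicalities handled in the cited references; your outline is otherwise correct and complete.
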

\medskip

We stress that the inequality $p(\gb)\le 0$ is the standard {\sl annealing}  bound. In view on \eqref{localization}, it is natural to say that {\sl very strong disorder} holds whenever $p(\gb)<0$.
One can moreover define $\bar \gb_c(d)$ the critical value of $\gb$ for the free energy i.e.\ :
\begin{equation}
 p(\gb)<0 \Leftrightarrow \gb>\bar\gb_c(d).
\end{equation}
Let us stress that, from the physicists' viewpoint, $\bar\gb_c(d)$ is the natural critical point because it is a point of non-analyticity of the free energy (at least if $\bar \gb_c(d)>0$). In view of this definition, we obviously have $\bar\gb_c(d)\ge \gb_c(d)$. 
It is widely believed that $\bar\gb_c(d)=\gb_c(d)$, i.e.\ that there exists no intermediate phase where we have {\sl strong disorder} but not {\sl very strong disorder}. However, this is a challenging question:
Comets and Vargas \cite{CV} answered it in dimension $1+1$ by proving that $\bar\gb_c(1)=0$. In this paper, we make their result more precise. Moreover we prove that $\bar \gb_c(2)=0$.

\subsection{Presentation of the results}

The first aim of this paper is to sharpen the result of Comets and Vargas on the $1+1$-dimensional case. In fact, we are going
to give a precise statement on the behavior of $p(\gb)$ for small $\gb$. Our result is the following
\medskip
\begin{theorem}\label{pasgaussi}
When $d=1$ and the environment satisfies \eqref{expm}, there exist constants $c$ and $\gb_0< B$ (depending on the distribution of the environment) such that for all $0\le \gb\le \gb_0$ we have
\begin{equation}
-\frac{1}{c} \gb^4[1+(\log \gb)^2] \le p(\gb)\le -c \gb^4.
\end{equation}
\end{theorem}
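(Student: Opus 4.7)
The plan is to prove the two halves of the theorem by distinct techniques. For the upper bound I would use the fractional moment / coarse graining / change of measure scheme; for the lower bound, a quantitative second moment estimate on blocks of nearly optimal length, combined with super-additivity of $N\mapsto Q[\log Z_N]$.

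\textbf{Upper bound $p(\beta)\le -c\beta^4$.} Fix $\theta\in(0,1)$ (say $\theta=1/2$) and a block length $\ell=\ell(\beta)\asymp\beta^{-4}$. The goal is to show
$$
Q[W_\ell^{\theta}]\ \le\ \tfrac12,
$$
and then iterate along successive blocks of the polymer (via the Markov property of $S$ and super-multiplicativity of $Z$) to conclude $p(\beta)\le -c(\theta)/\ell\asymp -c\beta^{4}$. The scale $\ell\asymp\beta^{-4}$ is forced by the second moment: the pair overlap $L_\ell:=\sum_{i\le\ell}\mathbf 1_{S_i^{(1)}=S_i^{(2)}}$ of two independent 1D simple random walks is of order $\sqrt\ell$ with Gaussian-type tails, while $\Phi(\beta):=\lambda(2\beta)-2\lambda(\beta)\sim\beta^2$, so that $QW_\ell^2=E^{\otimes 2}[e^{\Phi(\beta)L_\ell}]$ exits the bounded regime precisely at $\ell\asymp\beta^{-4}$. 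To produce the fractional moment bound I would apply Hölder with a tilted environment law $\tilde Q\ll Q$ that shifts the mean of $\omega$ slightly downward on a sparse random subset of $\{1,\ldots,\ell\}\times\Z$ designed to cover a typical walk trajectory:
$$
Q[W_\ell^{\theta}]\ \le\ \bigl(Q[(dQ/d\tilde Q)^{\theta/(1-\theta)}]\bigr)^{1-\theta}\bigl(\tilde Q[W_\ell]\bigr)^{\theta},
$$
with the parameters of the tilt tuned so that both factors are small.

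\textbf{Lower bound $p(\beta)\ge -\tfrac1c\beta^4[1+(\log\beta)^2]$.} Set $N_0:=\lfloor c_0\beta^{-4}(\log(1/\beta))^{-2}\rfloor$ for a small absolute constant $c_0>0$. The plan is to first prove the single-block estimate $Q[\log W_{N_0}]\ge -C$, and then invoke super-additivity of $N\mapsto Q[\log Z_N]$. The latter comes from the decomposition
$$
Z_{N+M}\ =\ \sum_y P(S_N=y)\,P\bigl[e^{\beta H_N}\mid S_N=y\bigr]\,\tilde Z_M^{(y)}
$$
(where $\tilde Z_M^{(y)}$ is the partition function on the post-$N$ environment started at $y$), applying Jensen to the $\mu_N$-weighted mean in $Z_{N+M}/Z_N$, and using that $\mu_N$ is independent of $\tilde Z_M^{(\cdot)}$ under $Q$; this yields $p(\beta)=\sup_{N\ge 1}Q[\log W_N]/N$, so in particular $p(\beta)\ge Q[\log W_{N_0}]/N_0\asymp -\beta^4(\log(1/\beta))^2$. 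The block estimate itself would be obtained in two steps: a second-moment computation using $\Phi(\beta)\sqrt{N_0}\asymp 1/\log(1/\beta)$,
$$
Q[W_{N_0}^2]\ =\ E^{\otimes 2}\bigl[e^{\Phi(\beta)L_{N_0}}\bigr]\ \le\ 1+\frac{C}{\log(1/\beta)},
$$
which by Chebyshev gives $\log W_{N_0}\ge -\log 2$ on an event of $Q$-probability at least $1-C'/\log(1/\beta)$; and a control of the rare-event contribution via a sub-Gaussian concentration inequality for $\log Z_{N_0}$, available under \eqref{expm} by a truncation + bounded-differences / entropy argument, combined with Cauchy--Schwarz.

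The main technical obstacle is the final $L^2\to L^1$ step. Closeness of $QW_{N_0}^2$ to $1$ does not by itself give a lower bound on $Q[\log W_{N_0}]$, because $\log$ is unbounded below at $0$ and a priori $W_{N_0}$ can be very small on a small-probability event. One needs the $L^2$ control on the bulk event to cooperate with a sharp concentration tail on the complementary event, and the quantitative balance between the two (in particular, the $L^2$ gap must be made small at the price of shrinking $N_0$ by a $(\log(1/\beta))^2$ factor compared to the second-moment threshold $\beta^{-4}$) is precisely what produces the $(\log\beta)^2$ loss between the upper and lower bounds.
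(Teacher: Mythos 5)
Your plan for the upper bound as stated has a gap in the iteration step. You propose to show $Q[W_\ell^{\theta}]\le\tfrac12$ for a single block of length $\ell\asymp\beta^{-4}$ with a fixed $\theta=\tfrac12$, and then ``iterate along successive blocks.'' But the fractional moment is not submultiplicative in the naive sense: there is no inequality $Q[W_{\ell m}^\theta]\le (Q[W_\ell^\theta])^m$. The available tool is the Comets--Vargas multiplicative-cascade majorization, which only bounds $\log Q[W_{\ell m}^\theta]$ by $m\log Q\sum_x [W_\ell(x)]^\theta$ (note the sum over all terminal sites $x$). That sum over $O(\sqrt\ell)$ endpoints forces $\theta$ to be taken close to $1$ (roughly $\theta=1-1/\log\ell$), and then the Hölder cost of the change of measure re-introduces a $(\log\beta)^2$ factor; this is precisely the ``warm-up'' computation of Section~\ref{rough}, which gives only $p(\beta)\le -c\beta^4/(\log\beta)^2$. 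To obtain the stated $p(\beta)\le -c\beta^4$ one needs the further coarse-graining idea of Section~\ref{onedim}: decompose $W_{nm}$ over the coarse-grained endpoint sequence $Y=(y_1,\dots,y_m)$ with $S_{kn}\in I_{y_k}=[y_k\sqrt n,(y_k+1)\sqrt n)$, and let the tilt $\tilde Q_Y$ depend on $Y$ by following the corresponding corridor. This keeps the effective number of terms in the endpoint sum $O(1)$ per block and allows $\theta=1/2$ fixed. Your sketch (a single tilt on ``a sparse random subset designed to cover a typical trajectory'') is essentially the fixed-corridor version and does not capture this $Y$-dependence, so it cannot deliver the optimal exponent.

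Your lower bound route is genuinely different from the paper's, but it also has a gap at its stated crux. The paper does \emph{not} use concentration of $\log W_n$ in $d=1$ at all. Instead it restricts to corner-to-corner trajectories in rectangular cells of width $\sqrt n$ and length $n$, shows the resulting restricted partition functions $\bar W$ attached to cells are i.i.d., controls $\mathrm{Var}_Q(\bar W/Q\bar W)$ by a second-moment (renewal) estimate at scale $n\asymp\beta^{-4}/|\log\beta|$, and then invokes an oriented-percolation argument: cells with $\bar W\ge \tfrac12 Q\bar W$ percolate with positive probability, producing on a positive-probability event an explicit infinite path along which $W_{nm}$ is bounded below by a product; $Q$-a.s.\ constancy of $p(\beta)$ finishes the argument. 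The $(\log\beta)^2$ in the final bound comes from the $|\log\beta|$ lost in the second-moment threshold together with the factor $\log Q\bar W\asymp -\log n$ from the $n^{-3/2}$ probability of a corner-to-corner excursion (Lemma~\ref{th:lem1}). In contrast, your plan relies on making the $L^2\to L^1$ passage work via a concentration inequality for $\log W_{N_0}$. The provable bound is $\mathrm{Var}_Q(\log W_n)\le C n$ (from the martingale method used in Section~\ref{dim12}), and plugging this into either a Cauchy--Schwarz split or a two-regime tail integral gives, after optimizing over $n$, only $p(\beta)\gtrsim -\beta^2$ or worse — nowhere near $-\beta^4(\log\beta)^2$. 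To close the gap you would need a variance bound for $\log W_n$ of order roughly $\beta^2\sqrt n$ at the scale $n\asymp\beta^{-4}$, which amounts to controlling the quenched replica overlap — itself of the same difficulty as the statement you are trying to prove. This is exactly the obstruction the percolation argument is designed to bypass; you correctly identified the obstacle but the ``sub-Gaussian concentration + Cauchy--Schwarz'' fix does not quantitatively resolve it. (For the two-dimensional lower bound, Section~\ref{dim12}, this cruder concentration route \emph{does} work, because there one only needs a sub-exponentially small bound and the $n^{-1/4}$ loss is affordable; in $d=1$ it is not.)

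One smaller point: your super-additivity reduction is fine — $Q[\log Z_{N+M}]\ge Q[\log Z_N]+Q[\log Z_M]$ follows from Jensen applied to $Z_{N+M}/Z_N=\sum_y\mu_N(S_N=y)\tilde Z_M^{(y)}$ and independence — so $p(\beta)=\sup_N N^{-1}Q[\log W_N]$ is available and would indeed reduce the problem to a single-block estimate, if only the latter could be proved at the right scale.
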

\medskip
We believe that the logarithmic factor in the lower bound is an artifact of the method. In fact, by using replica-coupling, we have been able to get rid of it in the Gaussian case.
\medskip
\begin{theorem}\label{gaussi}
When $d=1$ and the environment is Gaussian, there exists a constant $c$ such that for all $\gb \le 1$.
\begin{equation}
-\frac{1}{c}\gb^4 \le p(\gb)\le  -c \gb^4.
\end{equation}
\end{theorem}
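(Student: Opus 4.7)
The upper bound $p(\gb)\le -c\gb^4$ is already delivered by Theorem \ref{pasgaussi}, whose proof is insensitive to the precise law of the environment beyond \eqref{expm}. The plan therefore concentrates on the matching lower bound $p(\gb)\ge -c^{-1}\gb^4$, which is where the Gaussian assumption lets us remove the logarithmic correction. The natural tool is a quadratic replica-coupling interpolation in the spirit of Talagrand: Gaussian integration by parts converts every derivative of a log-partition function into a clean overlap quantity, with no higher moments of $\go$ to absorb.

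First, I would interpolate between the trivial system and the polymer via
\[
A_N(t):=\frac{1}{N}Q\log P\exp\bigl(\sqrt{t}\,\gb H_N(S)-\tfrac{t N\gb^2}{2}\bigr),
\]
so that $A_N(0)=0$ and $A_N(1)=p_N(\gb)$. A direct Gaussian integration by parts gives $A_N'(t)=-\frac{\gb^2}{2N}\,Q\,\mu_N^{(\sqrt{t}\gb)\otimes 2}[L_N]$, where $L_N:=\sum_{i=1}^N\ind_{S^{(1)}_i=S^{(2)}_i}$ is the overlap between two independent polymer replicas. The problem is thus reduced to controlling this expected overlap uniformly along the interpolation.

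Second, to bound the overlap I would introduce the tilted two-replica partition function
\[
\tilde Z_N(t,\ga):=P^{\otimes 2}\exp\bigl(\sqrt{t}\,\gb(H_N(S^{(1)})+H_N(S^{(2)}))+\ga L_N\bigr),\qquad\ga\ge 0,
\]
and use that $\partial_\ga\log\tilde Z_N(t,\ga)$ equals the tilted mean of $L_N$, hence is non-decreasing in $\ga$ by convexity. This monotonicity yields $\ga\,Q\mu_N^{(\sqrt{t}\gb)\otimes 2}[L_N]\le Q\log\tilde Z_N(t,\ga)-Q\log\tilde Z_N(t,0)$. Combining Jensen's inequality applied to the first log with the exact identity $Q\log\tilde Z_N(t,0)=2Np_N(\sqrt{t}\gb)+tN\gb^2$ produces the key estimate
\[
Q\mu_N^{(\sqrt{t}\gb)\otimes 2}[L_N]\le \ga^{-1}\Bigl(\log P^{\otimes 2}e^{(t\gb^2+\ga)L_N}-2Np_N(\sqrt{t}\gb)\Bigr).
\]
The essential one-dimensional input is that $L_N$ is the local time at zero of the difference of two nearest-neighbour walks on $\Z$, which is sub-Gaussian at scale $\sqrt{N}$; consequently $\log P^{\otimes 2}e^{\gga L_N}\le c_0\gga^2 N+O(1)$ for all $\gga$ in a fixed neighbourhood of zero.

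Inserting this estimate into the formula for $A_N'(t)$, choosing $\ga$ proportional to $\gb^2$, and invoking the monotonicity $p_N(\sqrt{t}\gb)\ge p_N(\gb)$ should yield a pointwise lower bound of the form $A_N'(t)\ge -C_1\gb^4+\theta\, p_N(\gb)$ with some $\theta\in(0,1)$. Integrating over $t\in[0,1]$ produces the self-referential inequality $p_N(\gb)\ge -C_1\gb^4+\theta\,p_N(\gb)$, which rearranges to $p_N(\gb)\ge -C\gb^4$ uniformly in $N$, and passing to the limit concludes. The step I anticipate as the main obstacle is the quantitative sub-Gaussian overlap estimate: the $\sqrt{N}$-scaling of random-walk collision counts is classical, but producing an exponential moment bound with explicit constants valid uniformly over the relevant range of $\gga$, so that the bookkeeping delivers exactly $\gb^4$ without any stray logarithmic factor, requires careful handling of the distribution of $L_N$.
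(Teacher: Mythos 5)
Your plan is correct and is essentially the paper's own argument (Section \ref{dim11}): the same Gaussian replica-coupling interpolation $\Phi_N(t,\gb)=p_N(\sqrt t\gb)$, the same integration by parts giving the overlap, the same convexity-in-the-tilt step to bound the overlap by a difference of tilted free energies, and the same reduction to the homogeneous pinning partition function $P^{\otimes 2}\exp(\gga L_N)$ whose free energy behaves like $\gga^2/2$. The only cosmetic differences are (i) you kill the disorder in the tilted two-replica quantity via Jensen's inequality, whereas the paper establishes the identical bound $\Psi_N(t,\gl,\gb)\le\Psi_N(0,\gl+t,\gb)$ through the differential inequality $\partial_t\Psi_N\le\partial_\gl\Psi_N$ (these yield the same annealed comparison, since for Gaussian $\go$ the annealing is exact), and (ii) you close the argument by replacing $p_N(\sqrt t\gb)$ with $p_N(\gb)$ (a monotonicity that indeed follows from the sign of $\Phi_N'$ you already computed) and rearranging, while the paper keeps $\Phi_N(t)$ on the right-hand side and integrates the resulting linear differential inequality Gronwall-style to obtain $p_N(\gb)\ge(1-e)\Psi_N(0,2,\gb)$; both routes give $p(\gb)\ge -c^{-1}\gb^4$.
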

\medskip
These estimates concerning the free energy give us some idea of the behavior of $\mu_N$ for small $\gb$. Indeed, Carmona and Hu in \cite[Section 7]{CH_ptrf} proved a relation between $p(\gb)$ and the overlap (although their notation differs from ours).
This relation together with our estimates for $p(\gb)$ suggests that, for low $\gb$, the asymptotic contact fraction between independent polymers
\begin{equation}
 \lim_{N\to\infty} \frac{1}{N}\mu_N^{\otimes 2} \sum_{n=1}^N \ind_{\{S_n^{(1)}=S_n^{(2)}\}},
\end{equation}
behaves like $\gb^2$.
\medskip

The second result we present is that $\bar \gb_c(2)=0$. As for the $1+1$-dimensional case, our approach yields an explicit bound on $p(\gb)$ for $\gb$ close to zero.
\medskip

\begin{theorem}\label{1+2uplb}
 When $d=2$, there exist constants $c$ and $\gb_0$ such that for all $\gb\le \gb_0$,
\begin{equation}
-\exp\left(-\frac{1}{c\gb^2}\right) \le p(\gb)\le -\exp\left(-\frac{c}{\gb^4}\right),
\end{equation}
so that
\begin{equation}
\bar \gb_c(2)=0,
\end{equation}
and $0$ is a point of non-analyticity for $p(\gb)$.
\end{theorem}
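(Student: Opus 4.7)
The theorem contains two inequalities of rather different nature. The lower bound is classical and rests on a second moment computation on a single block; the upper bound is the main new contribution (it is what gives $\bar\beta_c(2)=0$) and uses the fractional moment--coarse graining--change of measure scheme familiar from related disordered systems.

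\emph{Lower bound.} A direct computation gives
\begin{equation}
Q W_n^2 = P^{\otimes 2} \exp\!\Bigl((\lambda(2\beta)-2\lambda(\beta))\sum_{i=1}^n \ind_{\{S^{(1)}_i=S^{(2)}_i\}}\Bigr),
\end{equation}
with $\lambda(2\beta)-2\lambda(\beta)\sim \beta^2$ for small $\beta$. Since for two independent simple random walks on $\Z^2$ the intersection count up to time $n$ is concentrated and of order $\log n$, the right-hand side stays bounded (say by $2$) as long as $n\leq \exp(c/\beta^2)$. Combining Paley--Zygmund with the (essentially) super-additivity of $N\mapsto Q\log W_N$ (which yields $p(\beta)\geq n^{-1}Q\log W_n+o(1)$) gives the announced lower bound $p(\beta)\geq -\exp(-1/(c\beta^2))$.

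\emph{Upper bound.} Fix $\theta\in(0,1)$ close to $1$ and a block length $n=n(\beta)$ to be tuned. Jensen's inequality gives $p(\beta)\leq (\theta N)^{-1}\log Q W_N^\theta$, so the task is to make $Q W_N^\theta$ decay exponentially in $N$. Partition $[0,Mn]$ into $M$ blocks of length $n$ and coarse-grain paths by the sequence $(y_0,\ldots,y_M)$ of boxes of side $\sqrt n$ in $\Z^2$ containing $S_{kn}$. Decomposing $W_{Mn}=\sum_{(y_k)}\prod_k Z_k(y_{k-1},y_k)$ and using $(\sum a_i)^\theta\leq \sum a_i^\theta$ for $\theta\leq 1$ reduces the problem to proving
\begin{equation}
\sum_{y_M}Q\bigl(Z_k(y_{k-1},y_M)\bigr)^\theta \leq \tfrac{1}{2}, \quad\text{uniformly in } y_{k-1}.
\end{equation}
For each block one applies H\"older with a tilted measure $\tilde Q$ that biases the environment to produce atypically many collisions of a walk pair in the block: $Q Z^\theta\leq (Q(\dd Q/\dd \tilde Q)^{\theta/(1-\theta)})^{1-\theta}(\tilde Q Z)^\theta$. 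The $\chi^2$-type cost of the tilt scales like the squared tilt times the number of sites affected, while the gain on the tilted first moment of $Z$ is driven by the expected intersection local time, which in dimension two is of order $\log n$. Balancing these two effects dictates $n\sim \exp(c/\beta^4)$, producing $p(\beta)\leq -c/n\leq -\exp(-c/\beta^4)$.

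\emph{Main obstacle.} The heart of the argument is the engineering of the change of measure on scales as large as $\exp(c/\beta^4)$: the tilt must be strong enough to suppress the tilted first moment of a block partition function, yet mild enough that the $\chi^2$-cost, raised to the power $\theta/(1-\theta)$, is offset by the entropy of the coarse-grained endpoint in the second reduction. The logarithmic (rather than polynomial) growth of the intersection local time in dimension two is what forces the exponential block size and is responsible for the mismatch $\beta^2\leftrightarrow\beta^4$ between lower and upper bounds.
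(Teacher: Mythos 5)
Your lower bound sketch is incomplete. Paley--Zygmund plus the second moment bound give $Q\{W_n\ge 1/2\}\ge c>0$, but this alone says nothing about $Q\log W_n$: on the complementary event $\{W_n<1/2\}$, $\log W_n$ could a priori be arbitrarily negative, and a positive-probability event cannot be converted into a bound on the expected log without some control of the lower tail. The paper supplies exactly that via the martingale concentration estimate $\var_Q(\log W_n)\le C n$ (from \cite{CSY}): with probability $\ge 1-Cn^{-1/2}$ one has $|\frac{1}{n}\log W_n-\frac{1}{n}Q\log W_n|<n^{-1/4}$, and intersecting this with the event $\{W_n\ge 1/2\}$ (probability $\ge 1/2$) forces $\frac{1}{n}Q\log W_n\ge -n^{-1/4}-\frac{\log 2}{n}$; superadditivity then gives $p(\gb)\ge\frac{1}{n}Q\log W_n$. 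You need to add this concentration step.

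Your upper bound sketch misses the central new idea of the theorem. You describe a ``tilt'' whose $\chi^2$-cost is ``squared tilt times number of sites affected'' — that is precisely the one-dimensional mean shift, and the paper explicitly observes that a mean shift cannot work in $d=2$: the corridor has $\sim n^2$ sites per block, so the cost constraint forces $\gd_n\lesssim 1/n$ and the gain $\gb\gd_n n\lesssim\gb$ never becomes large. What is actually needed is a \emph{quadratic} perturbation of the environment: in the Gaussian case the paper changes the covariance, introducing negative correlations $-V_{(i,z),(j,z')}\asymp -(n\sqrt{\log n}\,|j-i|)^{-1}\ind_{\{|z-z'|\le C\sqrt{|j-i|}\}}$ inside each block; the Hölder cost is controlled by the Hilbert--Schmidt norm $\|\hat V\|\le 1$, while the gain for a diffusive path is $\gb^2\sum_{i\ne j}V_{(i,S_i),(j,S_j)}\asymp\gb^2\sqrt{\log n}$, whence $\log n\gtrsim\gb^{-4}$ (for general environment one penalizes environments with large quadratic form $\sum V\go\go$ via $g_Y(\go)$). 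Note also the arithmetic in your sketch is internally inconsistent: a gain of order $\gb^2\log n$ balanced against an $O(1)$ cost would yield $n\sim\exp(\gb^{-2})$ — the lower-bound scale — not $\exp(\gb^{-4})$, which is a sign that the mechanism you had in mind is not the one that actually produces the $\gb^{-4}$ exponent.
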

\medskip

\begin{rem}\rm After the appearance of this paper as a preprint, the proof of the above result has been adapted by Bertin \cite{Bertin} to prove the exponential decay of the partition function for {\sl Linear Stochastic Evolution} in dimension $2$, a model that is a slight generalisation of directed polymer in random environment.
\end{rem}
\medskip

\begin{rem}\rm
 Unlike in the one dimensional case, the two bounds on the free energy provided by our methods do not match. We believe that the second moment method, that gives the lower bound is quite sharp and gives the right order of magnitude for $\log p(\gb)$.
The method developped in \cite{cf:kbodies} to sharpen the estimate on the critical point shift for pinning models at marginality adapted to the context of directed polymer should be able to improve the result, getting $p(\gb)\le -\exp(-c_{\gep} \gb^{-(2+\gep)})$ for all $\gb\le 1$ for any $\gep$. 
\end{rem}

\subsection{Organization of the paper}
The various techniques we use have been inspired by ideas used successfully for another polymer model, namely the polymer pinning on a defect line (see \cite{F_rc,cf:GLT,cf:DGLT,F,GLTm}). 

However the ideas we use to establish lower bounds differ sensibly from the ones leading to the upper bounds. For this reason, we present first the proofs of the upper bound results in Section \ref{rough}, \ref{onedim} and \ref{twodim}. The lower bound results are proven in Section \ref{lb11}, \ref{dim11} and \ref{dim12}.
\medskip

To prove the lower bound results, we use a technique that combines the so-called {\sl fractional moment method} and change of measure. This approach has been first used for pinning model in \cite{cf:DGLT} and it has been
refined since in \cite{F,GLTm}. In Section \ref{rough}, we prove a non-optimal upper bound for the free energy in the case of Gaussian environment in dimension $1+1$ to introduce the reader to this method. In Section \ref{onedim} we prove the optimal upper bound for arbitrary environment in dimension $1+1$, and in Section \ref{twodim} we prove our upper bound for the free energy in dimension $1+2$ which implies that {\sl very strong disorder} holds for all $\gb$. These sections are placed in increasing order of technical complexity, and therefore, should be read in that order.
\medskip

Concerning the lower--bounds proofs: Section \ref{lb11} presents a proof of the lower bound of Theorem \ref{pasgaussi}. The proof combines the second moment method and a directed percolation argument. In Section \ref{dim11} the optimal bound is proven for Gaussian environment, with a specific Gaussian approach similar to what is done in \cite{F_rc}. In Section \ref{dim12} we prove the lower bound for arbitrary environment in dimension $1+2$.
These three parts are completely independent of each other.

\section{Some warm up computations}\label{rough}

\subsection{Fractional moment}

Before going into the core of the proof, we want to present here the starting step that will be used repeatedly thourough Sections \ref{rough}, \ref{onedim} and \ref{twodim}.
We want to find an upper--bound for the quantity
\begin{equation}
 p(\gb)=\lim_{N\to \infty}\frac{1}{N} Q \log W_N. 
\end{equation}
However, it is not easy to handle the expectation of a $\log$, for this reason we will use the following trick . Let $\theta\in(0,1)$, we have (by Jensen inequality)

\begin{equation}
Q \log W_N=\frac{1}{\theta}Q\log W_N^{\theta}\le \frac{1}{\theta}\log Q W_N^{\theta}.
\end{equation}
Hence

\begin{equation}\label{momentfrac}
 p(\gb)\le \liminf_{N\to\infty} \frac{1}{\theta N}\log Q W_N^{\theta}.
\end{equation}
We are left with showing that the fractional moment $Q W_N^{\theta}$ decays exponentially which is a problem that is easier to handle.

\subsection{A non optimal upper--bound in dimension $1+1$}

To introduce the reader to the general method used in this paper, combining fractional moment and change of measure, we start by proving a non--optimal result for the free--energy, using a finite volume criterion. As a more complete result is to be proved in the next section, we restrict to the Gaussian case here. The method used here is based on the one  of \cite{CV}, marorizing the free energy of the directed polymer by the one of multiplicative cascades. Let us mention that is has bee shown recently by Liu and Watbled \cite{LV} that this majoration is in a sense optimal, they obtained this result by improving the concentration inequality for the free energy.
\medskip

The idea of combining fractional moment with change of measure and finite volume criterion has been used with success for the pinning model in \cite{cf:DGLT}.

 \begin{proposition}\label{lb}
 There exists a constant $c$ such that for all $\gb\le 1$ 

\begin{equation}
 p(\gb)\le -\frac{ c\gb^4}{(|\log \gb|+1)^2}
\end{equation}
 \end{proposition}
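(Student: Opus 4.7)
My plan combines the fractional-moment bound \eqref{momentfrac} (with $\theta=1/2$, say) with a block decomposition of the partition function and a Gaussian mean-shift change of measure on a single block. First, fixing a block length $L$ and writing $N=ML$, I would decompose the partition function by summing over the walk's positions $y_1,\dots,y_M$ at times $L,2L,\dots,N$, so that $Z_N = \sum_{y_1,\dots,y_M} \prod_{j=1}^M Z^{(j)}_{y_{j-1}\to y_j}$ with the factors $Z^{(j)}$ independent under $Q$. Applying the sub-additivity $(\sum x_i)^\theta\le \sum x_i^\theta$ valid for $\theta\in(0,1)$, together with independence, gives
\[
QW_N^\theta \le \Bigl(\sum_{y\in\Z} b_y\Bigr)^{\!M},\qquad b_y := Q\!\left[\left(Z_L^{0\to y}/QZ_L\right)^{\!\theta}\right].
\]
By \eqref{momentfrac}, any block length $L$ for which $\sum_y b_y \le 1/2$ already yields $p(\gb) \le (\theta L)^{-1}\log(1/2)$.

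To make $\sum_y b_y$ small I would change measure on a single block. Set $A := \{(i,z) : 1\le i\le L,\ |z|\le K\sqrt L\}$ for a large constant $K$, and let $\tilde Q$ be the law obtained from $Q$ by shifting the mean of each $\go_{i,z}$ with $(i,z)\in A$ from $0$ to $-\gd$. H\"older's inequality then gives
\[
b_y \le \bigl(\tilde Q[(dQ/d\tilde Q)^{1/(1-\theta)}]\bigr)^{1-\theta}\,\bigl(\tilde Q[Z_L^{0\to y}/QZ_L]\bigr)^\theta.
\]
A direct Gaussian computation gives $\bigl(\tilde Q[(dQ/d\tilde Q)^{1/(1-\theta)}]\bigr)^{1-\theta} = \exp(\theta\gd^2|A|/(2(1-\theta))) \le \exp(cKL^{3/2}\gd^2)$; applying Fubini to the second factor gives $\tilde Q[Z_L^{0\to y}/QZ_L] = P\bigl[\ind_{\{S_L=y\}}\exp(-\gb\gd\,L_L^A(S))\bigr]$, where $L_L^A(S):=\sum_{i=1}^L\ind_{\{(i,S_i)\in A\}}$ is the occupation time of the walk in $A$. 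Standard Gaussian tail bounds for the simple random walk give $P(L_L^A(S)<L)\le e^{-cK^2}$ once $K$ is large, and the local CLT bound $\sum_{|y|\le \sqrt L} P(S_L=y)^\theta \le c L^{(1-\theta)/2}$ controls the sum over endpoints; combining these,
\[
\sum_{y\in\Z} b_y \;\le\; c\,L^{(1-\theta)/2}\,\exp\!\bigl(cKL^{3/2}\gd^2 - \theta\gb\gd L\bigr).
\]

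Optimizing in $\gd$ gives $\gd\asymp\gb/\sqrt L$, producing an exponent of order $-c\gb^2\sqrt L$. To compensate the polynomial prefactor $L^{(1-\theta)/2}$ one needs $\gb^2\sqrt L \gtrsim \log L$, and the choice $L=L(\gb)\sim (|\log \gb|+1)^2/\gb^4$ works, yielding $\sum_y b_y \le 1/2$ and finally
\[
p(\gb) \le -\frac{\log 2}{\theta L(\gb)} \le -\frac{c\gb^4}{(|\log \gb|+1)^2}.
\]

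The main technical obstacle is precisely this logarithmic correction: the polynomial factor $L^{(1-\theta)/2}$, coming from counting possible block-endpoints reachable in $L$ steps, forces $L$ to exceed the naive scale $\gb^{-4}$ by a $(\log\gb)^2$ factor --- which is exactly the source of the (presumably non-optimal) logarithmic correction in the statement. Beyond this everything is routine bookkeeping: a concrete choice of parameters (e.g.\ $\theta=1/2$), the Gaussian change-of-measure computation, and the standard random-walk concentration for $L_L^A(S)$.
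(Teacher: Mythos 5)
Your overall strategy (fractional moment at fixed $\theta$, block decomposition, Gaussian mean shift on the block) is the right family of ideas, and your diagnosis that the logarithm comes from the sum over block endpoints is essentially correct. But the bound you write for $\sum_y b_y$ is not correct as stated, and when it is corrected, the choice $\theta=1/2$ only yields $p(\gb)\le -c\gb^4/(\log\gb)^4$, not the claimed $(\log\gb)^2$.

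The problem is in the line $\tilde Q[Z_L^{0\to y}/QZ_L]=P[\ind_{\{S_L=y\}}e^{-\gb\gd L_L^A}]$: the benefit $e^{-\gb\gd L}$ is only realized on the event $\{L_L^A=L\}$, i.e.\ when the walk stays inside $A$. You correctly note $P(L_L^A<L)\le e^{-cK^2}$, but that term is not merely ``small'': it is multiplied by the H\"older cost $\exp(cKL^{3/2}\gd^2)$, which with your optimal choice $\gd\asymp\gb/\sqrt L$ equals $\exp(c'\gb^2\sqrt L)$ and diverges as $\gb\to 0$ (under any choice $L\asymp |\log\gb|^a/\gb^4$). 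Since the ``leaves $A$'' contribution receives this amplification but no compensating $e^{-\gb\gd L}$, your displayed inequality
$\sum_y b_y\le cL^{(1-\theta)/2}\exp(cKL^{3/2}\gd^2-\theta\gb\gd L)$
is false: the left side contains an additional term of order $e^{cKL^{3/2}\gd^2}\cdot e^{-cK^2}\cdot(\text{polynomial in }L)$, which blows up. So the argument does not close with $\gd\asymp\gb/\sqrt L$.

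The only way to control the escape term is to keep the change-of-measure cost bounded, i.e.\ $KL^{3/2}\gd^2=O(1)$, forcing $\gd\lesssim L^{-3/4}$ and a benefit of order only $\gb L^{1/4}$ rather than $\gb^2\sqrt L$. With $\theta=1/2$ you then still have the prefactor $L^{1/4}$, so you need $\gb L^{1/4}\gtrsim \log L$, which gives $L\asymp|\log\gb|^4/\gb^4$ and hence $p(\gb)\le -c\gb^4/(\log\gb)^4$, a strictly weaker bound than the proposition. The paper avoids this loss by taking $\theta=1-(\log n)^{-1}$ rather than $\theta=1/2$: with $\theta\to 1$ at this rate, both the H\"older cost and the fractional-sum prefactor $n^{O(1-\theta)}$ stay $O(1)$, so the benefit only needs to be a large constant instead of $\gtrsim\log L$. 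This is precisely what lets the paper push $L$ down to $|\log\gb|^2/\gb^4$ (with the companion choice $\gd_n\asymp n^{-3/4}(\log n)^{-1/2}$ needed to keep the cost bounded when $1-\theta=1/\log n$). So the choice of $\theta$ is not a cosmetic detail here --- it is where the two powers of $\log\gb$ are won or lost, and your sketch as written does not recover the stated exponent.
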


\begin{proof}[Proof of Proposition \ref{lb} in the case of Gaussian environment]
For $\beta$ sufficiently small, we choose $n$ to be equal to $\left\lceil \frac{C_1 |\log \gb|^{2}}{\gb^4}\right\rceil$ for a fixed constant $C_1$ (here and thourough the paper for $x\in\R$, $\lceil x\rceil$, respectively $\lfloor x\rfloor$ will denote the upper, respectively the lower integer part of $x$) and define $\theta:=1-(\log n)^{-1}$.
For $x\in \Z$ we define
\begin{equation}
 W_n(x):= P  \exp\left(\sum_{i=1}^n [\gb\go_{(i,S_i)}-\gb^2/2]\right)\ind_{\{S_n=x\}}.
\end{equation}
Note that $\sum_{x\in \Z} W_n(x)=W_n$.
We use a statement which can be found in the proof of Theorem 3.3. in \cite{CV}:

\begin{equation}
\log Q W_{nm}^{\theta}\le m \log Q \sum_{x\in \Z} [W_n(x)]^{\theta} \quad \forall m\in \N.
\end{equation}
This combined with \eqref{momentfrac} implies that

\begin{equation}
p(\gb)\le \frac{1}{\theta n}\log Q \sum_{x\in \Z} [W_n(x)]^{\theta}.
\end{equation}

Hence, to prove the result, it is sufficient to show that

\begin{equation}
 Q \sum_{x\in \Z} [W_n(x)]^{\theta}\le e^{-1}, \label{eq:aprouver}
\end{equation}
for our choice of $\theta$ and $n$.

In order to estimate $Q [W_n(x)]^{\theta}$ we use an auxiliary measure $\tilde Q$.
The region where the walk $(S_i)_{0\le i\le n}$ is likely to go is
$J_n=\left([1,n]\times[-C_2\sqrt{n},C_2\sqrt{n}]\right)\cap \N\times \Z$ where $C_2$ is a big constant.

We define $\tilde Q$ as the measure under which the $\go_{i,x}$ are still independent Gaussian variables with variance $1$, but
such that $\tilde Q \go_{i,x}=-\delta_n\ind_{(i,x)\in J_n}$ where $\gd_n=1/(n^{3/4}\sqrt{2C_2\log n})$. This measure is absolutely continuous with respect to $Q$ and

\begin{equation} \label{eq:shift}
 \frac{\dd \tilde Q}{\dd Q}=\exp\left(-\sum_{(i,x)\in J_n}\left[\delta_n\go_{i,x}+\frac{\gd_n^2}{2}\right]\right).
\end{equation}
Then we have for any $x\in \Z$, using the H\"older inequality we obtain,

\begin{equation}\label{boun}
 Q \left[W_n(x)^{\theta}\right]=\tilde Q \left[\frac{\dd Q}{\dd \tilde Q}\left(W_n(x)\right)^\theta\right]\le \left(\tilde Q \left[\left(\frac{\dd Q}{\dd \tilde Q}\right)^{\frac{1}{1-\theta}}\right]\right)^{1-\theta}\left( \tilde Q W_n(x)\right)^{\theta}.
\end{equation}
The first term on the right-hand side can be computed explicitly and is equal to

\begin{equation}\label{bbboun}
\left(Q \left(\frac{\dd Q}{\dd \tilde Q}\right)^{\frac{\theta}{1-\theta}}\right)^{1-\theta}= \exp\left(\frac{\theta\gd_n^2}{2(1-\theta)}\# J_n \right)\le e,
\end{equation}
where the last inequality is obtained by replacing $\gd_n$ and $\theta$ by their values (recall $\theta=1-(\log n)^{-1}$). Therefore combining \eqref{boun} and \eqref{bbboun} we get that

\begin{equation}
  Q \sum_{x\in \Z} \left(W_n(x)\right)^{\theta}\le e \sum_{|x|\le n} \left(\tilde Q W_n(x)\right)^{\theta}.
\end{equation}

To bound the right--hand side, we first get rid of the exponent $\theta$ in the following way:

\begin{multline}
 \sum_{|x|\le n} n^{-3 \theta}\left(\tilde Q W_n(x)\right)^{\theta}\le n^{-3\theta}\#\{x\in\Z,\ |x|\le n \text{ such that } \tilde Q W_n(x)\le n^{-3}\}\\
\quad +\sum_{|x|\le n}\ind_{\{\tilde Q W_n(x)> n^{-3}\}}  \tilde Q W_n(x) n^{3(1-\theta)}.
\end{multline}
If $n$ is sufficiently large ( i.e., $\gb$ sufficiently small) the first term on the right-hand side is smaller than $1/n$ so that

\begin{equation}
 \sum_{|x|\le n} \left(\tilde Q W_n(x)\right)^{\theta}\le \exp(3)\tilde Q W_n+\frac{1}{n}.
\end{equation}
We are left with showing that the expectation of $W_n$ with respect to the measure $\tilde Q$ is small.
It follows from the definition of $\tilde Q$ that

\begin{equation}
 \tilde Q W_n= P\exp\left(-\gb\delta_n\#\{i\ |\ (i,S_i)\in J_n\}\right),
\end{equation}
and therefore

\begin{equation}
 \tilde Q W_n\le P\{\text{the trajectory $S$ goes out of $J_n$}\}+\exp(-n\gb\gd_n).
\end{equation}

One can choose $C_2$ such that the first term is small, and the second term is equal to
$\exp(-\gb n^{1/4}/\sqrt{2C_2\log n})\le \exp(-C_1^{1/4}/4\sqrt{C_2})$ that can be arbitrarily small by
choosing $C_1$ large compared to $(C_2)^{1/2}$.
In that case \eqref{eq:aprouver} is satisfied and we have
\begin{equation}
 p(\gb)\le \frac{1}{\theta n}{\log e^{-1}}\le -\frac{\gb^4}{2 C_1|\log \gb|^2}
\end{equation}
for small enough $\gb$. 
\end{proof}

\section{Proof of the upper bound of Theorem \ref{pasgaussi} and \ref{gaussi}}\label{onedim}

The upper bound we found in the previous section is not optimal, and can be improved by replacing the finite volume criterion \eqref{eq:aprouver} by a more sophisticated coarse graining method. The technical advantage of the coarse graining we use, is that we will not have to choose the $\theta$ of the fractional moment close to $1$ as we did in the previous section and this is the way we get rid of the extra $\log$ factor we had.
The idea of using this type of coarse graining for the copolymer model appeared in \cite{F} and this has been a substantial source of inspiration for this proof.

We will prove the following result first in the case of Gaussian environment, and then adapt the proof to general environment.


\begin{proof}[Proof in the case of Gaussian environment]

Let $n$ be the smallest squared integer bigger than $C_3\gb^{-4}$ (if $\gb$ is small we are sure that $n\le 2C_3\gb^{-4}$).  The number $n$ will be used in the sequel of the proof as a scaling factor. Let $\theta<1$ be fixed (say $\theta=1/2$). We consider a system of size $N=n m$ (where $m$ is meant to tend to infinity).

Let $I_k$ denote the interval $I_k=[k\sqrt{n},(k+1)\sqrt n )$. In order to estimate $Q W_N^{\theta}$ we decompose $W_N$ according to the contribution of different families path:

\begin{equation}\label{eq:decompo}
W_{N}=\sum_{y_1, y_2,\dots, y_m\in \Z} \check{W}_{(y_1,y_2,\dots,y_m)}
\end{equation}
where 

\begin{equation}
\check{W}_{(y_1,y_2,\dots,y_m)}=P \exp \left[\sum_{i=1}^N \left(\gb\go_{i,S_i}-\frac{\gb^2}{2}\right)\ind_{\left\{S_{in}\in I_{y_i},\forall i= 1,\dots,m \right\}}\right].
\end{equation}

\begin{figure}[h]
\begin{center}
\leavevmode
\epsfysize =6.5 cm
\psfragscanon
\psfrag{O}[c]{\tiny{O}}
\psfrag{n}[c]{\tiny{$n$}}
\psfrag{2n}[c]{\tiny{$2n$}}
\psfrag{3n}[c]{\tiny{$3n$}}
\psfrag{4n}[c]{\tiny{$4n$}}
\psfrag{5n}[c]{\tiny{$5n$}}
\psfrag{6n}[c]{\tiny{$6n$}}
\psfrag{7n}[c]{\tiny{$7n$}}
\psfrag{8n}[c]{\tiny{$8n$}}
\psfrag{rn}[c]{\tiny{$+\sqrt{n}$}}
\psfrag{r2n}[c]{\tiny{$+2\sqrt{n}$}}
\psfrag{r3n}[c]{\tiny{$+3\sqrt{n}$}}
\psfrag{r4n}[c]{\tiny{$+4\sqrt{n}$}}
\psfrag{m1n}[c]{\tiny{$-\sqrt{n}$}}
\psfrag{m2n}[c]{\tiny{$-2\sqrt n$}}
\psfrag{m3n}[c]{\tiny{$-3\sqrt n$}}
\psfrag{1}[c]{\tiny{$1$}}
\psfrag{2}[c]{\tiny{$2$}}
\psfrag{3}[c]{\tiny{$3$}}
\psfrag{4}[c]{\tiny{$4$}}
\psfrag{5}[c]{\tiny{$5$}}
\psfrag{6}[c]{\tiny{$6$}}
\psfrag{7}[c]{\tiny{$7$}}
\psfrag{8}[c]{\tiny{$8$}}
\epsfbox{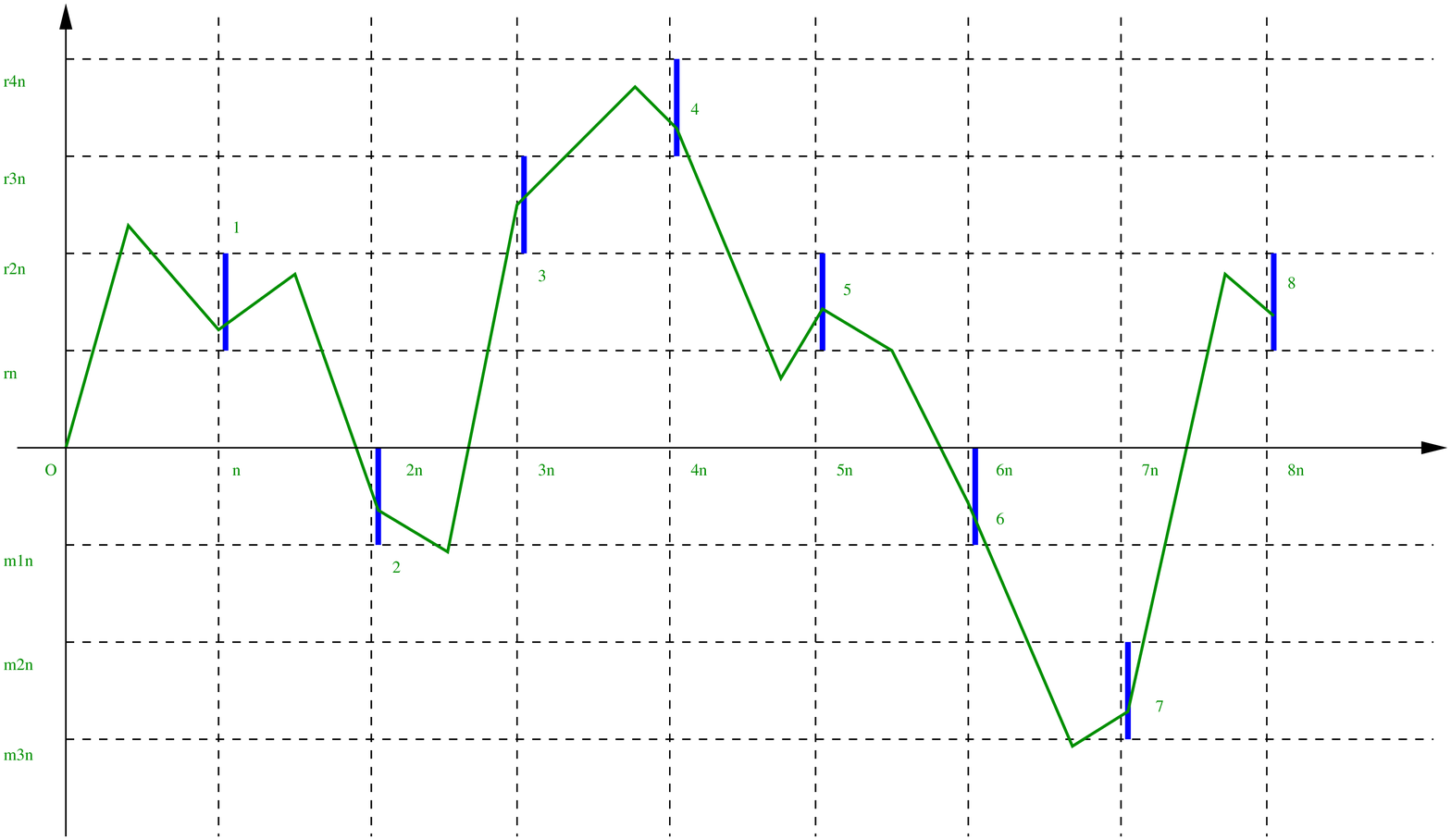}
\end{center}
\caption{\label{fig:wnn} The partition of $W_{nm}$ into $\check W^{(y_1,\dots,y_m)}$ is to be viewed as a coarse graining. For $m=8$, $(y_1,\dots,y_8)=(1,-1,2,3,1,-1,-3,1)$, $\check W_n^{(y_1,\dots,y_m)}$ corresponds to the contribution to $W_N$ of the path going through the thick barriers on the figure.}
\end{figure}

Then, we apply the inequality $\left(\sum a_i\right)^\theta\le \sum a_i^{\theta}$ (which holds for any finite or countable collection of positive real numbers) to this decomposition and average with respect to $Q$ to get,

\begin{equation}\label{eq:decompo2}
Q W_{nm}^{\theta}\le \sum_{y_1, y_2,\dots, y_m\in \Z} Q \check{W}_{(y_1,y_2,\dots,y_m)}^{\theta}.
\end{equation}
In order to estimate $Q\check{W}_{(y_1,y_2,\dots,y_m)}^{\theta}$, we use an auxiliary measure as in the previous section. The additional idea is to make the measure change depend on $y_1,\dots,y_m$.

For every $Y=(y_1,\dots,y_m)$ we define the set $J_Y$ as 
\begin{equation}
 J_Y:=\left\{ (km+i,y_k\sqrt{n}+z),\ k=0,\dots,m-1,\  i=1,\dots,n,\  |z|\le C_4\sqrt{n}\right\},
\end{equation}
where $y_0$ is equal to zero. Note that for big values of $n$ and $m$

\begin{equation} \label{cardj}
 \# J_Y\sim 2 C_4 m n^{3/2}
\end{equation}
We define the measure $\tilde Q_Y$ the measure under which the $\go_{(i,x)}$ are independent Gaussian variables with variance $1$ and mean $\tilde Q_Y \go_{(i,x)}=-\gd_n\ind_{\{(i,x)\in J_Y\}}$ where $\gd_n=n^{-3/4}C_4^{-1/2}$.
\begin{figure}[h]
\begin{center}
\leavevmode
\epsfysize =6.5 cm
\psfragscanon
\psfrag{O}[c]{\tiny{O}}
\psfrag{n}[c]{\tiny{$n$}}
\psfrag{2n}[c]{\tiny{$2n$}}
\psfrag{3n}[c]{\tiny{$3n$}}
\psfrag{4n}[c]{\tiny{$4n$}}
\psfrag{5n}[c]{\tiny{$5n$}}
\psfrag{6n}[c]{\tiny{$6n$}}
\psfrag{7n}[c]{\tiny{$7n$}}
\psfrag{8n}[c]{\tiny{$8n$}}
\psfrag{rn}[c]{\tiny{$+\sqrt{n}$}}
\psfrag{r2n}[c]{\tiny{$+2\sqrt{n}$}}
\psfrag{r3n}[c]{\tiny{$+3\sqrt{n}$}}
\psfrag{r4n}[c]{\tiny{$+4\sqrt{n}$}}
\psfrag{m1n}[c]{\tiny{$-\sqrt{n}$}}
\psfrag{m2n}[c]{\tiny{$-2\sqrt n$}}
\psfrag{m3n}[c]{\tiny{$-3\sqrt n$}}
\psfrag{1}[c]{\tiny{$1$}}
\psfrag{2}[c]{\tiny{$2$}}
\psfrag{3}[c]{\tiny{$3$}}
\psfrag{4}[c]{\tiny{$4$}}
\psfrag{5}[c]{\tiny{$5$}}
\psfrag{6}[c]{\tiny{$6$}}
\psfrag{7}[c]{\tiny{$7$}}
\psfrag{8}[c]{\tiny{$8$}}
\psfrag{ZMO}[c]{\small{Region where the environment is modified}}
\epsfbox{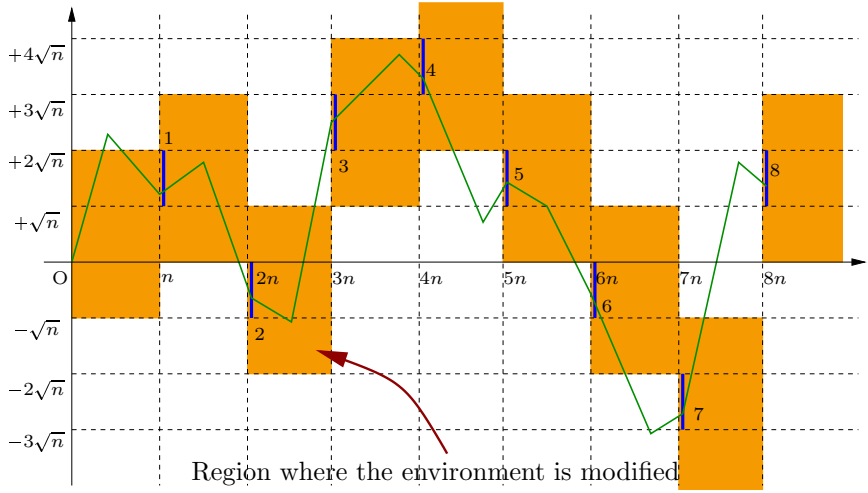}
\end{center}
\caption{\label{fig:wnn2} This figure represent in a rough way the change of measure $Q_Y$. The region where the mean of $\go_{(i,x)}$ is lowered (the shadow region on the figure) corresponds to the region where the simple random walk is likely to go, given that it goes through the thick barriers.}
\end{figure}
The law $\tilde Q_Y$ is absolutely continuous with respect to $Q$ and its density is equal to

\begin{equation}\label{qy1}
\frac{\dd\tilde Q_Y}{\dd Q}(\go)=\exp\left(-\sum_{(i,x)\in J_Y}\left[\gd_n\go_{(i,x)}+\gd_n^2/2\right]\right).
\end{equation}
Using H\"older inequality with this measure as we did in the previous section, we obtain

\begin{multline}
 Q \left[\check W_{(y_1,y_2,\dots,y_m)}^{\theta}\right]=\tilde Q_{Y} \left[\frac{\dd Q}{\dd \tilde Q_{Y}}\check W_{(y_1,y_2,\dots,y_m)}^{\theta}\right]\\
 \le \tilde Q_{Y}\left( \left[\left(\frac{\dd Q}{\dd \tilde Q_{Y}}\right)^{\frac{1}{1-\theta}}\right]\right)^{1-\theta}\left( \tilde Q_{Y} \check W_{(y_1,\dots,y_m)}\right)^{\theta} \label{eq:hhol}.
\end{multline}
The value of the first term can be computed explicitly

\begin{equation}
\left(Q \left[\left(\frac{\dd Q}{\dd \tilde Q_{Y}}\right)^{\frac{\theta}{1-\theta}}\right]\right)^{1-\theta}=\exp\left(\frac{\# J_Y\theta\gd_n^2}{2(1-\theta)}\right)\le \exp(3m), \label{eq:dens}
\end{equation}
where the upper bound is obtained by using the definition of $\gd_n$, \eqref{cardj} and the fact that $\theta=1/2$.

Now we compute the second term

\begin{equation}\label{ccek}
\tilde Q_{Y} \check W_{(y_1,\dots,y_m)}=P \exp\left(-\gb\gd_n\#\left\{i|(i,S_i)\in J_Y\right\}\right)\ind_{\{S_{kn}\in I_{y_k},\ \forall k \in [1,m]\}}.
\end{equation}
We define 
\begin{equation}\label{jayjay}\begin{split}
 J&:= \{(i,x),\ i=1,\dots,n,\ |x|\le C_4\sqrt{n}\}\\
 \bar J&:= \{(i,x),\ i=1,\dots,n,\ |x|\le (C_4-1)\sqrt{n}\}.
\end{split}
\end{equation}
Equation \eqref{ccek} implies that (recall that $P_x$ is the law of the simple random walk starting from $x$, and that we set $y_0=0)$
\begin{equation}\label{maxstart}
 \tilde Q_{Y} \check W_{(y_1,\dots,y_m)}\le \prod_{k=1}^m \max_{x\in I_0} P_x \exp\left(-\gb\gd_n\#\left\{i\ :\ (i,S_i)\in J\right\}\right)\ind_{\{S_{n}\in I_{y_k-y_{k-1}}\}}.
\end{equation}
Combining this with \eqref{eq:decompo}, \eqref{eq:hhol} and \eqref{eq:dens} we have

\begin{equation}
\log Q W_{N}^{\theta}\le m\left[3+\log \sum_{y\in\Z}\left(\max_{x\in I_0} P_x \exp\left(-\gb\gd_n\#\left\{i\ :\ (i,S_i)\in J\right\}\right)\ind_{\{S_{n}\in I_{y}\}}\right)^{\theta}\right].
\end{equation}
If the quantity in the square brackets is smaller than $-1$, by equation \eqref{momentfrac} we have  $p(\gb)\le -1/n$.
Therefore, to complete the proof it is sufficient to show that 
\begin{equation}
\sum_{y\in\Z}\left(\max_{x\in I_0} P_x \exp\left(-\gb\gd_n\#\left\{i\ :\ (i,S_i)\in J\right\}\right)\ind_{\{S_{n}\in I_{y}\}}\right)^{\theta} 
\end{equation}
is small. To reduce the problem to the study of a finite sum, we observe  (using some well known result on the asymptotic behavior of random walk) that given $\gep>0$ we can find $R$ such that

\begin{equation}\label{iop}
 \sum_{|y|\ge R}\left(\max_{x\in I_0} P_x \exp\left(-\gb\gd_n\#\left\{i\ :\ (i,S_i)\in J\right\}\right)\ind_{\{S_{n}\in I_{y}\}}\right)^{\theta}\le \sum_{|y|\ge R}\max_{x\in I_0} \left(P_x\{S_{n}\in I_{y}\}\right)^{\theta}\le \gep.
\end{equation}
To estimate the remainder of the sum we use the following trivial bound

\begin{multline}
 \sum_{|y|< R}\left(\max_{x\in I_0} P_x \exp\left(-\gb\gd_n\#\left\{i\ :\ (i,S_i)\in J\right\}\right)\ind_{\{S_{n}\in I_{y}\}}\right)^{\theta}\\
\le 2 R \left(\max_{x\in I_0} P_x \exp\left(-\gb\gd_n\#\left\{i\ :\ (i,S_i)\in J\right\}\right)\right)^{\theta}.
\end{multline}
Then we get rid of the $\max$ in the sum by observing that if a walk starting from $x$ makes a step in $J$, the walk with the same increments starting from $0$ will make the same step in $\bar J$ (recall \eqref{jayjay}).
\begin{equation} \label{fgh}
\max_{x\in I_0} P_x \exp\left(-\gb\gd_n\#\left\{i\ :\ (i,S_i)\in J\right\}\right)\le P \exp\left(-\gb\gd_n\#\left\{i|(i,S_i)\in \bar J\right\}\right).
\end{equation}
Now we are left with something similar to what we encountered in the previous section

\begin{equation} \label{jkl}
P \exp\left(-\gb\gd_n\#\left\{i\ :\ (i,S_i)\in \bar J\right\}\right)\le P\{\text{ the random walk goes out of $\bar J$ }\} + \exp(-n \gb \gd_n). 
\end{equation}
If $C_4$ is chosen large enough, the first term can be made arbitrarily small by choosing $C_4$ large, and the second is equal to $\exp(-C_3^{-1/4}/\sqrt{C_4})$ and can be made also arbitrarily small if $C_3$ is chosen large enough once $C_4$ is fixed.
An appropriate choice of constant and the use of \eqref{fgh} and \eqref{jkl} can  leads then to

\begin{equation} \label{klm}
 2 R \left(\max_{x\in I_0} P_x \exp\left(-\gb\gd_n\#\left\{i\ :\ (i,S_i)\in J\right\}\right)\right)^{\theta}\le \gep.
\end{equation}
This combined with \eqref{iop} completes the proof.
\end{proof}
\bigskip




\begin{proof}[Proof of the general case]
In the case of a general environment, some modifications have to be made in the proof above, but the general idea remains the same. In the change of measure one has to change the shift of the environment in $J_Y$ \eqref{qy1} by an exponential tilt of the measure as follow

\begin{equation}
\frac{\dd\tilde Q_{Y}}{\dd Q}(\gb)=\exp\left(-\sum_{(i,z)\in J_Y}\left[\gd_n\go_{(i,z)}+\gl(-\gd_n) \right]\right).
\end{equation}
The formula estimating the cost of the change of measure \eqref{eq:dens} becomes

\begin{equation}
\left(Q \left(\frac{\dd Q}{\dd \tilde Q_{Y}}\right)^{\frac{\theta}{1-\theta}}\right)^{1-\theta}=\exp\left(\# J_Y \left[(1-\theta)\gl\left(\frac{\theta\gd_n}{1-\theta}\right)+\theta\gl(-\gd_n)\right]\right)\le \exp(2 m),
\end{equation}
where the last inequality is true if $\gb_n$ is small enough if we consider that $\theta=1/2$ and use the fact
that $\gl(x)\stackrel{x\to 0}{\sim}x^2/2$ ($\go$ has $0$ mean and unit variance). The next thing we have to do is to compute the effect of this change of measure in this general case, i.e.\ find an equivalent for \eqref{ccek}.
When computing $\tilde Q_Y \check W_{(y_1,\dots,y_m)}$, the quantity 
\begin{equation}
\tilde Q_Y \exp(\gb\go_{1,0}-\gl(\gb))=\exp\left[\gl(\gb-\gd_n)-\gl(-\gd_n)-\gl(\gb)\right]
\end{equation}
appears instead of $\exp(-\gb\gd_n)$.
Using twice the mean value theorem, one gets that there exists $h$ and $h'$ in $(0,1)$ such that

\begin{equation}
 \gl(\gb-\gd_n)-\gl(-\gd_n)-\gl(\gb)=\gd_n\left[\gl'(-h\gd_n)-\gl'(\gb-h\gd_n)\right]=-\gb\gd_n\gl''(-h\gd_n+h'\gb).
\end{equation}
And as $\go$ has unit variance $\lim_{x\to 0} \gl''(x)=1$. Therefore if $\gb$ and $\gd_n$ are chosen small enough, the right-hand side of the above is less than $-\gb\gd_n/2$. So that \eqref{ccek} can be replaced by

\begin{equation}
\tilde Q_{Y} \check W_{(y_1,\dots,y_m)}\le P \exp\left(-\frac{\gb\gd_n}{2}\#\left\{i|(i,S_i)\in J_Y\right\}\right)\ind_{\{S_{kn}\in I_{y_k},\ \forall k \in [1,m]\}}.
\end{equation}

The remaining steps follow closely the argument exposed for the Gaussian case.
\end{proof}

\section{Proof of the upper bound in Theorem \ref{1+2uplb}}\label{twodim}

In this section, we prove the main result of the paper: very strong disorder holds at all temperature in dimension $2$. 


\medskip

The proof is technically quite involved. It combines the tools of the two previous sections with a new idea for the change a measure: changing the covariance structure of the environment. We mention that this idea was introduced recently in \cite{GLTm} to deal with the {\sl marginal disorder} case in pinning model. We choose to present first a proof for the Gaussian case, where the idea of the change a measure is easier to grasp.

\medskip

Before starting, we sketch the proof and how it should be decomposed in different steps:
\begin{itemize}
 \item [(a)] We reduce the problem by showing that it is sufficient to show that
 for some real number $\theta<1$, $Q W_N^{\theta}$ decays exponentially with $N$.
 \item [(b)] We use a coarse graining decomposition of the partition function by splitting it into different contributions that corresponds to trajectories that stays in a large corridor. This decomposition is similar to the one used in Section \ref{onedim}.
 \item [(c)] To estimate the fractional moment terms appearing in the decomposition, we change the law of the environment around the corridors corresponding to each contribution. More precisely, we introduce negative correlations into the Gaussian field of the environment. We do this change of measure in such a way that 
the new measure is not very different from the original one.
 \item [(d)] We use some basic properties of the random walk in $\Z^2$ to compute the expectation under the new measure.
\end{itemize}

\medskip

\begin{proof}[Proof for Gaussian environment]
 We fix $n$ to be the smallest squared integer bigger than $\exp(C_5/\gb^4)$ for some large constant $C_5$ to be defined later, for small $\gb$ we have $n\le \exp(2 C_5/\gb^4)$. The number $n$ will be used in the sequel of the proof as a scaling factor. 
For $y=(a,b)\in \Z^2$ we define 
$I_y=[a\sqrt{n},(a+1)\sqrt n-1]\times[b\sqrt{n},(b+1)\sqrt{n}-1]$ so that $I_y$ are disjoint and cover $\Z^2$.
For $N=nm$, we decompose the normalized partition function $W_N$ into different contributions, very similarly to what is done in dimension one (i.e.\ decomposition \eqref{eq:decompo2}), and we refer to the figure \ref{fig:wnn2} to illustrate how the decomposition looks like:

\begin{equation}
W_N=\sum_{y_1,\dots,y_m\in \Z^2}\check W_{(y_1,\dots,y_m)}
\end{equation}
where
\begin{equation}
\check{W}_{(y_1,\dots,y_m)}=P \exp \left(\sum_{i=1}^N \left[\gb\go_{i,S_i}-\gb^2/2\right] \right)\ind_{\left\{S_{in}\in I_{y_i},\forall i= 1,\dots,m \right\}}.
\end{equation}
We fix $\theta<1$ and apply the inequality $(\sum a_i)^{\theta}\le \sum a_i^{\theta}$ (which holds for any finite or countable collection of positive real numbers) to get

\begin{equation}\label{bobbob}
 Q W_N^{\theta}\le \sum_{y_1,\dots,y_m\in \Z^2}Q\check W_{(y_1,\dots,y_m)}^{\theta}.
\end{equation}

In order to estimate the different terms in the sum of the right--hand side in \eqref{bobbob}, we define some auxiliary measures $\tilde Q_Y$ on the the environment for every $Y=(y_0,y_1,\dots,y_m)\in \Z^{d+1}$ with $y_0=0$. We will choose the measures $Q_Y$ absolutely continuous with respect to $Q$. We use H\"older inequality to get the following upper bound:

\begin{equation}\label{hhold}
Q \check W_{(y_1,\dots,y_m)}^{\theta}\le  \left(Q \left(\frac{\dd Q}{\dd\tilde Q_{Y}}\right)^{\frac{\theta}{1-\theta}}\right)^{1-\theta} \left( \tilde Q_{Y} \check W_{(y_1,\dots,y_m)}\right)^{\theta}.
\end{equation}

Now, we describe the change of measure we will use. Recall that for the $1$-dimensional case we used a shift of the environment along the corridor corresponding to $Y$. The reader can check that this method would not give the exponential decay of $W_N$ in this case. Instead we change the covariance function of the environment along the corridor on which the walk is likely to go by introducing some negative correlation.

We introduce the change of measure that we use for this case. Given $Y=(y_0,y_1,\dots,y_m)$ we define $m$ blocks $(B_k)_{k\in[1,m]}$ and $J_Y$ their union (here and in the sequel, $|z|$ denotes the $l^{\infty}$ norm on $\Z^2$):

\begin{equation}\begin{split}\label{blocdef}
 B_k&:=\left\{(i,z)\in \N\times \Z^2 \, : \, \lceil i/n \rceil =k \text{ and } |z-\sqrt{n} y_{k-1}|\le C_6 \sqrt{n}\right\},\\
 J_Y&:=\bigcup_{k=1}^m B_k.
\end{split}
\end{equation}
We fix the covariance the field $\go$ under the law $\tilde Q_{Y}$ to be equal to

\begin{equation}\begin{split}\label{matvar}
&\tilde Q_Y \left(\go_{i,z}\go_{i,z'}\right)=\mathcal C^Y_{(i,z),(j,z')}\\
 &:=\begin{cases}
   \ind_{\{(i,z)=(j,z')\}}-V_{(i,z),(j,z')} & \text{ if } \exists\ k\in[1,m] \text{ such that } (i,z) \text{ and } (j,z')\in B_k
   \\
    \ind_{\{(i,z)=(j,z')\}} & \text{ otherwise,}
  \end{cases}\end{split}
\end{equation}
where

\begin{equation}\label{vdef}
 V_{(i,z),(j,z')}:=\begin{cases}
  0 & \text{ if } (i,z)=(j,z')
   \\
 \frac{\ind_{\{|z-z'|\le C_7 \sqrt{|j-i|}\}}}{100 C_6C_7n\sqrt{\log n}|j-i|} & \text{ otherwise.}
\end{cases}
\end{equation}
We define

\begin{equation}
\hat V:=(V_{(i,z),(j,z')})_{(i,z),(j,z')\in B_1}.
\end{equation}

One remarks that the so-defined covariance matrix  $\mathcal C^Y$ is block diagonal with $m$ identical blocks which are copies of $I-\hat V$ corresponding to the $B_k$, $k\in [1,m]$, and just ones on the diagonal elsewhere.
Therefore, the change of measure we describe here exists if and only if  $I-\hat V$ is definite positive. 

The largest eigenvalue for $\hat V$ is associated to a  positive vector and therefore is smaller than 
\begin{equation}\label{specray}
 \max_{(i,z)\in B_1}\sum_{(j,z')\in{B_1}} \left|V_{(i,z),(j,z')}\right|\le \frac{C_7}{C_6 \sqrt{\log n}}.
\end{equation}
For the sequel we choose $n$ such that the spectral radius of $\hat V$ is less than $(1-\theta)/2$ so 
that $I-\hat V$ is positive definite. 
With this setup, $\tilde Q_Y$ is well defined.

The density of the modified measure $\tilde Q_{Y}$ with respect to $Q$ is
given by
\begin{equation}
 \frac{\dd \tilde Q_Y}{\dd Q}(\go)=\frac{1}{\sqrt{\det \mathcal C^Y}}\exp\left(-\frac{1}{2}^t\go((\mathcal C^Y)^{-1}-I)\go\right),
\end{equation}
where 
\begin{equation}
^t\go M \go= \sum_{(i,z),(j,z')\in \N\times \Z^2}\go_{(i,z)}M_{(i,z),(j,z')}\go_{(j,z')},
\end{equation}
for any matrix $M$ of $(\N\times \Z^2)^2$ with finite support.

Then we can compute explicitly the value of the second term in the right-hand side of \eqref{hhold}
\begin{equation}\label{detei}
\left(Q\left(\frac{\dd Q}{\dd\tilde Q_Y}\right)^{\frac{\theta}{1-\theta}}\right)^{1-\theta}=\sqrt{\frac{\det \mathcal C^Y}{\det \left(\frac{\mathcal C^Y}{1-\theta}-\frac{\theta I}{1-\theta}\right)^{1-\theta}}}.
\end{equation}
Note that the above computation is right if and only if $\mathcal C^{Y} -\theta I$ is a definite positive matrix.
Since its eigenvalues are the same of those of $(1-\theta)I-\hat{V}$, this holds for large $n$ thanks to \eqref{specray}. 
Using again the fact that $\mathcal C^Y$ is composed of $m$ blocks identical to $I-\hat V$, we get from \eqref{detei}
\begin{equation}\label{detbl}
\left(Q\left(\frac{\dd Q}{\dd\tilde Q}\right)^{\frac{\theta}{1-\theta}}\right)^{1-\theta}=\left(\frac{\det(I
-\hat V)}{\det(I-\hat V/(1-\theta))^{1-\theta}}\right)^{m/2}.
\end{equation}
In order to estimate the determinant in the denominator, we compute the Hilbert-Schmidt norm of $\hat V$. One can check that for all $n$

\begin{equation} \label{hsnorm}
 \| \hat V \|^2= \sum_{(i,z),(j,z')\in B_1} V_{(i,z),(j,z')}^2\le 1.
\end{equation}
We use the inequality $\log( 1+x )\ge x-x^2$ for all $x\ge -1/2$ and the fact that the spectral radius of $\hat V/(1-\theta)$ is bounded by $1/2$ (cf. \eqref{specray}) to get that

\begin{equation}\label{truv}\begin{split}
 \det\left[I-\frac {\hat V}{1-\theta}\right]&=\exp\left(\tr\left(\log\left(I-\frac{\hat V}{1-\theta}\right)\right)\right)\ge \exp\left(-\frac{\| \hat V\|^2}{(1-\theta)^2}\right)\\
&\ge \exp\left(-\frac{1}{(1-\theta)^2}\right).
\end{split}\end{equation}
For the numerator, $\tr\ \hat V=0$ implies that that $\det(I-\hat V)\le 1$.
Combining this with \eqref{detbl} and \eqref{truv} we get

\begin{equation}\label{densityy}
\left(Q\left(\frac{\dd Q}{\dd\tilde Q_Y}\right)^{\frac{\theta}{1-\theta}}\right)^{1-\theta}\le \exp\left(\frac{m}{2(1-\theta)}\right).
\end{equation}
Now that we have computed the term corresponding to the change of measure, we estimate $\check W_{(y_1,\dots,y_m)}$ under the modified measure (just by computing the variance of the Gaussian variables in the exponential, using \eqref{matvar})\ :

\begin{multline}\label{modest}
 \tilde Q_{Y} \check W_{(y_1,\dots,y_m)}=P\, \tilde Q_Y \exp\left( \sum_{i=1}^N \left(\gb \go_{i,S_i}-\frac{\gb^2}{2}\right)\right)\ind_{\left\{S_{kn}\in I_{y_k},\forall k= 1,\dots,m \right\}}\\
=P \exp\left(\frac{\gb^2}{2}\sumtwo{1\le i,\ j\le N}{z,z'\in \Z^2} \left(\mathcal C^Y_{(i,z),(j,z')}-\ind_{\{(i,z)=(j,z')\}} \right)\ind_{\{S_i=z,S_j=z'\}}\right)\ind_{\left\{S_{kn}\in I_{y_k},\forall k= 1,\dots,m \right\}}.
\end{multline}
Replacing $\mathcal C^Y$ by its value we get that

\begin{multline}
 \tilde Q_{Y} \check W_{(y_1,\dots,y_m)}=P \exp  \left(-\frac{\gb^2}{2}\sumtwo{1\le i\neq j\le N}{1\le k\le m}\frac{\ind_{\{\left((i,S_i),(j,S_j)\right)\in B_k^2,\ |S_i-S_j|\le C_7\sqrt{|i-j|}\}}}{100C_6C_7 n \sqrt{\log n}|j-i|}\right)
\\
\ind_{\left\{S_{kn}\in I_{y_k},\forall k= 1,\dots,m \right\}}.
\end{multline}
Now we do something similar to $\eqref{maxstart}$: for each ``slice'' of the trajectory $(S_i)_{i\in[(m-1)k,mk]}$, we bound the contribution of the above expectation by maximizing over the starting point (recall that $P_x$ denotes the probability distribution of a random walk starting at $x$). Thanks to the conditioning, the starting point has to be in $I_{y_k}$. Using the translation invariance of the random walk, this gives us the following ($\vee$ stands for maximum):

\begin{multline} \label{rrrr} 
 \tilde Q_{Y} \check W^{(y_1,\dots,y_m)}\le \prod_{i=k}^m \max_{x\in I_0}\ P_x\bigg[\\
\exp\left(-\frac{\gb^2}{2}\sum_{1\le i\neq j\le n}\frac{\ind_{\{|S_i|\vee|S_j|\le C_6\sqrt{n}, \ |S_i-S_j|\le C_7\sqrt{|i-j|}\}}}{100 C_6 C_7 n \sqrt{\log n} |j-i|}\right)
\ind_{\left\{S_{n}\in I_{y_k-y_{k-1}}\right\}}\bigg].
\end{multline}
For trajectories $S$ of a directed random-walk of $n$ steps, we define the quantity 

\begin{equation}
G(S):=\sum_{1\le i\neq j\le n}\frac{\ind_{\{|S_i|\vee |S_j|\le C_6\sqrt{n}, \ |S_i-S_j|\le C_7\sqrt{|i-j|}\}}}{100 C_6 C_7 n \sqrt{\log n} |j-i|}.
\end{equation}
Combining \eqref{rrrr} with \eqref{densityy}, \eqref{hhold} and \eqref{bobbob}, we finally get

\begin{equation}\label{aaaaargh}
Q W_N^{\theta}\le \exp\left(\frac{m}{2(1-\theta)}\right)\left[\sum_{y\in \Z}\max_{x\in I_0} \left(P_x \exp\left(-\frac{\gb^2}{2}G(S)\right)\ind_{\{S_n\in I_{y}\}}\right)^{\theta}\right]^m.
\end{equation}
 The exponential decay of $Q W_N^{\theta}$ (with rate $n$) is guaranteed if we can prove that

\begin{equation}
 \sum_{y\in \Z}\max_{x\in I_0}\left(P_x \exp\left(-\frac{\gb^2}{2}G(S)\right)\ind_{\{S_n\in I_{y}\}}\right)^{\theta}
\end{equation}
is small. The rest of the proof is devoted to that aim.

We fix some $\gep>0$. Asymptotic properties of the simple random walk, guarantees that we can find $R=R_{\gep}$ such that
\begin{equation}\label{tr0}
  \sum_{|y|\ge R}\max_{x\in I_0}\left(P_x \exp\left(-\frac{\gb^2}{2}G(S)\right)\ind_{\{S_n\in I_{y}\}}\right)^{\theta}
\le \sum_{|y|\ge R} \max_{x\in I_0}\ \left(P_x\{S_n\in I_y\}\right)^{\theta}\le \gep.
\end{equation}
To estimate the rest of the sum, we use the following trivial and rough bound

\begin{equation}\label{tr10}
 \sum_{|y|< R}\max_{x\in I_0}\left[ P_x \exp\left(-\frac{\gb^2}{2}G(S)\right)\ind_{\{S_n\in I_{y}\}}\right]^{\theta}
\le R^2 \left[\max_{x\in I_0} P_x \exp\left(-\frac{\gb^2}{2}G(S)\right)\right]^{\theta}
\end{equation}
Then we use the definition of $G(S)$ to get rid of the $\max$ by reducing the width of the zone where we have negative correlation:

\begin{equation}
 \max_{x\in I_0} P_x \exp\left(-\frac{\gb^2}{2}G(S)\right)\le P \exp\left(-\frac{\gb^2}{2}\sum_{1\le i\neq j\le n}\frac{\ind_{\{|S_i|\vee |S_j|\le (C_6-1)\sqrt{n}, \ |S_i-S_j|\le C_7\sqrt{|i-j|}\}}}{100 C_6 C_7 n \sqrt{\log n} |j-i|}\right).
\end{equation}
We define $\bar B:=\{(i,z)\in \N\times \Z^2\ : \ i\le m, |z|\le (C_6-1)\sqrt{n}\}$. We get from the above that

\begin{multline}\label{tr11}
  \max_{x\in I_0} P_x \exp\left(-\frac{\gb^2}{2}G(S)\right)\le  P\{\text{the RW goes out of } \bar B \}\\
+
P \exp\left(-\frac{\gb^2}{2}\sum_{1\le i\neq j\le n}\frac{\ind_{\{ |S_i-S_j|\le C_7\sqrt{|i-j|}\}}}{100 C_6 C_7 n \sqrt{\log n} |j-i|}\right)
\end{multline}
One can make the first term of the right-hand side arbitrarily small by choosing $C_6$ large, in particular on can choose $C_6$ such that

 \begin{equation}\label{tr1}
 P\left\{ \max_{i\in[0,n]} |S_n|\ge (C_6-1)\sqrt{n}\right\}\le (\gep/R^2)^{\frac{1}{\theta}}.
\end{equation}
To bound the other term, we introduce the quantity

\begin{equation}
D(n):=\sum_{1\le i\neq j\le n}\frac{1}{n \sqrt{\log n} |j-i|},
\end{equation}
and the random variable $X$,

\begin{equation}\label{xdef}
 X:=\sum_{1\le i\neq j\le n}\frac{\ind_{\{|S_i-S_j|\le C_7\sqrt{|i-j|}\}}}{n \sqrt{\log n} |j-i|}.
\end{equation}
For any $\delta>0$, we can find $C_7$ such that $P(X)\ge( 1-\gd )D(n)$.
We fix $C_7$ such that this holds for some good $\gd$ (to be chosen soon), and by remarking that $0\le X \le D(n)$ almost surely, we obtain
(using Markov inequality)

\begin{equation}\label{xest}
 P\{ X>D(n)/2 \}\ge 1-2\delta.
\end{equation}
Moreover we can estimate $D(n)$ getting that for $n$ large enough

\begin{equation}\label{dest}
 D(n)\ge \sqrt{\log n}.
\end{equation}
Using \eqref{xest} and \eqref{dest} we get

\begin{equation}\begin{split}
 P \exp\left(-\frac{\gb^2}{2}\sum_{1\le i\neq j\le n}\frac{\ind_{\{ |S_i-S_j|\le C_7\sqrt{|i-j|}\}}}{100C_6 C_7 n \sqrt{\log n} |j-i|}\right)&=P \exp\left(-\frac{\gb^2}{200C_6C_7}X\right)\\
&\le 2\gd + \exp\left(-\frac{\gb^2\sqrt{\log n}}{200C_6C_7}\right).
\end{split}\end{equation}
Due to the choice of $n$ we have made (recall $n\ge \exp(C_5/\gb^4)$), the second term is less than
$\exp\left(-\gb^2 C_5^{1/2}/(200 C_6C_7)\right)$. We can choose $\gd$, $C_7$ and $C_5$ such that, the right-hand side is less that $(\gep/R^2)^{\frac{1}{\theta}}$.
This combined with \eqref{tr1}, \eqref{tr11}, \eqref{tr10} and \eqref{tr0} allow us to conclude that

\begin{equation}
 \sum_{y\in \Z}\max_{x\in I_0}\left(P_x \exp\left(-\frac{\gb^2}{2}G(S)\right)\ind_{\{S_n\in I_{y}\}}\right)^{\theta}\le 3\gep
\end{equation}
So that with a right choice for $\gep$, \eqref{aaaaargh} implies

\begin{equation}
Q W_N^{\theta}\le \exp(-m).
\end{equation}
Then \eqref{momentfrac} allows us to conclude that $p(\gb)\le -1/n$.

\end{proof}

\bigskip

\begin{proof}[Proof for general environment]
The case of general environment does not differ very much from the Gaussian case, but one has to a different approach for the change of measure in \eqref{hhold}. In this proof, we will largely refer to what has been done in the Gaussian case, whose proof should be read first.
\medskip

Let $K$ be a large constant.
One defines the function $f_K$ on $\R$ as to be
\begin{equation*}
 f_K(x)=-K\ind_{\{x>\exp(K^2)\}}.
\end{equation*}
Recall the definitions \eqref{blocdef} and \eqref{vdef}, and define $g_Y$ function of the environment as

\begin{equation*}
 g_Y(\go)=\exp\left(\sum_{k=1}^m f_K\left(\sum_{(i,z),(j,z')\in B_k} V_{(i,z),(j,z')}\go_{i,z}\go_{j,z'}\right)\right).
\end{equation*}
Multiplying by $g_Y$ penalizes by a factor $\exp(-K)$ the environment for which there is to much correlation in one block. This is a way of producing negative correlation in the environment.
For the rest of the proof we use the notation
\begin{equation}
 U_k:=\sum_{(i,z),(j,z')\in B_k} V_{(i,z),(j,z')}\go_{i,z}\go_{j,z'}
\end{equation}

We do a computation similar to \eqref{hhold} to get

\begin{equation}\label{hhold2}
 Q\left[\check W_{(y_1,\dots,y_m)}^{\theta}\right]\le \left(Q\left[ g_Y(\go)^{-\frac{\theta}{1-\theta}}\right]\right)^{1-\theta}\left( Q\left[ g_Y(\go)\check W_{(y_1,\dots,y_n)}\right]\right)^{\theta}.
\end{equation}
The block structure of $g_Y$ allows to express the first term as a power of $m$.

\begin{equation}
Q\left[ g_Y(\go)^{-\frac{\theta}{1-\theta}}\right]=\left(Q\left[\exp\left(-\frac{\theta}{1-\theta}f_K\left(U_1\right)\right)\right]\right)^m.
\end{equation}
Equation \eqref{hsnorm} says that
\begin{equation}
 \var_Q\left(U_1\right)\le 1.
\end{equation}
So that 
\begin{equation}
 P\left\{U_1\ge \exp(K^2)\right\}\le \exp(- 2 K^2),
\end{equation}
and hence
\begin{multline}\label{fsterm}
Q\left[\exp\left(-\frac{\theta}{1-\theta}f_K\left(U_1\right)\right)\right]\\
\le 1+\exp\left(-2K^2+\frac{\theta}{1-\theta}K\right)\le 2,
\end{multline}
if $K$ is large enough.
We are left with estimating the second term
\begin{equation}\label{secterm}
  Q\left[g_Y(\go)\check W_{(y_1,\dots,y_n)}\right]=P Q g_Y(\go)\exp\left(\sum_{i=1}^{nm}[\gb \go_{i,S_i}-\gl(\gb)] \right)\ind_{\{S_{kn}\in I_{y_k},\forall k=1\dots m\}}.
\end{equation}
For a fixed trajectory of the random walk $S$, we consider $\bar Q_S$ the modified measure on the environment with density

\begin{equation}
\frac{\dd \bar Q_S}{\dd Q}:=\exp\left(\sum_{i=1}^{nm}[\gb \go_{i,S_i}-\gl(\gb)] \right).
\end{equation}
Under this measure

\begin{equation}
 \bar Q_S \go_{i,z}=\begin{cases}
  0 & \text{ if } z\neq S_i
   \\
 Q \go e^{\gb\go_{0,1}-\gl(\gb)}:=m(\gb) & \text{ if } z=S_i.
\end{cases}
\end{equation}
As $\go_{1,0}$ has zero-mean and unit variance under $Q$, \eqref{expm} implies $m(\gb)=\gb+o(\gb)$ around zero and that $\var_{\bar Q _S} \go_{i,z}\le 2$ for all $(i,z)$ if $\gb$ is small enough. Moreover $\bar Q_S$ is a product measure, i.e.\ the $\go_{i,z}$ are independent variables under $Q_S$.
With this notation \eqref{secterm} becomes

\begin{equation}
 P \bar Q_S \left[g_Y(\go)\right]\ind_{\{S_{kn}\in I_{y_k},\forall k=1,\dots,m \}}.
\end{equation}
As in the Gaussian case, one wants to bound this by a product using the block structure. Similarly to \eqref{rrrr}, we use translation invariance to get the following upper bound

\begin{equation}
 \prod_{k=1}^m \max_{x\in I_0} P_x \bar Q_S \exp\left(f_K\left(U_1\right)\right)\ind_{\{S_n\in I_{y_k-y_{k-1}}\}}.
\end{equation}
Using this in \eqref{hhold2} with the bound \eqref{fsterm} we get the inequality

\begin{equation}
Q W_N^{\theta}\le 2^{m(1-\theta)}\!\! \left(\sum_{y\in \Z^2}\!\!  \left[\max_{x\in I_0} P_x \bar Q_S \exp\left(f_K\left(U_1\right)\right)\ind_{\{S_n\in I_{y}\}}\right]^{\theta}\right)^m\!\!\!\!.
\end{equation}
Therefore to prove exponential decay of $Q W_N^{\theta}$, it is sufficient to show that 

\begin{equation}
\sum_{y\in \Z^2} \left[\max_{x\in I_0} P_x \bar Q_S \exp\left(f_K\left(U_1\right)\right)\ind_{\{S_n\in I_{y}\}}\right]^{\theta}
\end{equation}
is small. As seen in the Gaussian case ( cf.\ \eqref{tr0},\eqref{tr10}), the contribution of $y$ far from zero can be controlled and therefore it is sufficient for our purpose to check
\begin{equation}
 \max_{x\in I_0} P_x \bar Q_S \exp\left(f_K\left(U_1\right)\right)\le \gd,
\end{equation}
for some small $\gd$. Similarly to \eqref{tr11}, we force the walk to stay in the zone where the environment is modified by writing
\begin{multline}\label{foor}
\max_{i\in I_0} P_x \bar Q_S \exp\left(f_K\left(U_1\right)\right)\le 
P\{ \max_{i\in[0,n]}|S_i|\ge (C_6-1)\sqrt{n}\}\\
+\max_{x\in I_0} P_x \bar Q_S \exp\left(f_K\left(U_1\right)\right)\ind_{\{|S_n-S_0|\le (C_6-1)\sqrt{n}\}}.
\end{multline}
The first term is smaller than $\gd/6$ if $C_6$ is large enough. To control the second term, we will find an upper bound for

\begin{equation}
 P_x \bar Q_S \exp\left(f_K\left(U_1\right)\right)\ind_{\{\max_{i\in[0,n]}|S_i-S_0|\le (C_6-1)\sqrt{n}\}},
\end{equation}
which is uniform in $x\in I_0$.

What we do is the following: we show that for most trajectories $S$ the term in $f_K$ has a large mean and a small variance with respect to $Q_S$ so that $f_K(\ \dots \ )=-K$ with large $\bar Q_S$ probability. The rest will be easy to control as the term in the expectation is at most one.

The expectation of $U_1$ under $\bar Q_S$ is equal to
\begin{equation}
 m(\gb)^2\sum_{1\le i,j\le n} V_{(i,S_i),(j,S_j)}.
\end{equation}
When the walk stays in the block $B_1$ we have (using definition \eqref{xdef})
\begin{equation}\label{xvij}
 \sum_{1\le i,j\le n} V_{(i,S_i),(j,S_j)}=\frac{1}{100C_6C_7}X.
\end{equation}
The distribution of $X$ under $P_x$ is the same for all $x \in I_0$. It has been shown earlier (cf.\ \eqref{xest} and \eqref{dest}), that if $C_7$ is chosen large enough, 
\begin{equation}
 P\left\{ \frac{m(\gb)^2}{100C_6C_7} X\le \frac{\sqrt{\log n}}{200 C_6C_7}\right\}\le \frac{\gd}{6}.
\end{equation}
As $m(\gb)\ge \gb/2$ if $\gb$ is small, if $C_5$ is large enough (recall $n\ge \exp(C_5/\gb^4)$), this together with \eqref{xvij} gives.
\begin{equation}\label{del2}
 P_x\bigg\{ m(\gb)^2 \bar Q_S\left(U_1\right)\le 2\exp(K^2);
 \max_{i\in[0,n]}|S_i-S_0|\le (C_6-1)\sqrt{n}\bigg\}\le \frac{\gd}{6}.
\end{equation}
To bound the variance of $U_1$ under $\bar Q_S$, we decompose the sum

\begin{multline}
 U_1=\sum_{(i,z),(j,z')\in B_1} V_{(i,z),(j,z')} \go_{i,z}\go_{j,z}=m(\gb)^2\sum_{1\le i,j\le n} V_{(i,S_i),(j,S_j)}\\
+
2m(\gb)\sumtwo{1\le i\le n}{(j,z')\in B_1} V_{(i,S_i),(j,z')}(\go_{j,z'}-m(\gb)\ind_{\{z'=S_j\}})\\
+\sum_{(i,z),(j,z')\in B_1} V_{(i,z),(j,z')}(\go_{i,z}-m(\gb)\ind_{\{z=S_i\}})(\go_{j,z}-m(\gb)\ind_{\{z'=S_j\}}).
\end{multline}
And hence (using the fact that $(x+y)^2\le 2x^2+2y^2$). 
\begin{equation}\label{varcontrol}
 \var_{\bar Q_S} U_1\le 16 m(\gb)^2\sum_{(j,z')\in B_1}\left(\sum_{1\le i\le n}V_{(i,S_i),(j,z')}\right)^2+8\!\!\!\! \sum_{((i,z),(j,z')\in B_1}\!\!\!\! V_{(i,z),(j,z')}^2,
\end{equation}
where we used that $\var_{\bar Q_S} \go_{i,z}\le 2$ (which is true for $\gb$ small enough). The last term is less than $8$ thanks to \eqref{hsnorm}, so that we just have to control the first one.
Independently of the choice of $(j,z')$ we have the bound

\begin{equation}\label{trivbo}
\sum_{1\le i\le n}V_{(i,S_i),(j,z')}\le \frac{\sqrt{\log n}}{C_6 C_7 n}.
\end{equation}
Moreover it is also easy to check that

\begin{equation}
\sum_{(j,z')\in B_1}\sum_{1\le i\le n}V_{(i,S_i),(j,z')}\le \frac{C_7n}{C_6\sqrt{\log n}},
\end{equation}
(these two bounds follow from the definition of $V_{(i,z),(j,z')}$:\ \eqref{vdef}).
Therefore
\begin{equation}
 \sum_{(j,z')\in B_1}\left(\sum_{1\le i\le n}V_{(i,S_i),(j,z')}\right)^2\le \left[\sum_{(j,z')\in B_1}\sum_{1\le i\le n}V_{(i,S_i),(j,z')}\right]\max_{(j,z)\in B_1}\sum_{1\le i\le n}V_{(i,S_i),(j,z')}\le 1.
\end{equation}
Injecting this into \eqref{varcontrol} guaranties that for $\gb$ small enough

\begin{equation}
 \var_{\bar Q_S} U_1 \le 10.
\end{equation}
With Chebyshev inequality, if $K$ has been chosen large enough and 
\begin{equation}
 \bar Q_S U_1\ge 2\exp(K^2),
\end{equation}
we have

\begin{equation}\label{del3}
 \bar Q_S  \left\{ U_1 \le \exp (K^2)\right\}\le \gd/6.
\end{equation}
Hence combining  \eqref{del3} with \eqref{del2} gives
\begin{equation}
 P_x\bar Q_S  \left\{ U_1 \le \exp (K^2);\ \max_{i\in[0,n]}|S_i-S_0|\le (C_6-1)\sqrt{n}\right\}\le \gd/3.
\end{equation}
We use this in \eqref{foor} to get 

\begin{equation}
  \max_{x\in I_0} P_x \bar Q_S \exp\left(f_K\left(U_1\right)\right)\le \frac{\gd}{2}+e^{-K}.
\end{equation}
So that our result is proved provided that $K$ has been chosen large enough.




\end{proof}

\section{Proof of the lower bound in Theorem \ref{pasgaussi}}\label{lb11}

In this section we prove the lower bound for the free-energy in dimension $1$ in arbitrary environment.
To do so we apply the second moment method to some quantity related to the partition function, and combine it
with a percolation argument. The idea of the proof was inspired by a study of a polymer model on hierarchical lattice \cite{LM} where this type of coarse-graining appears naturally.
\medskip
\begin{proposition}\label{th:prop}
There exists a constant $C$ such that for all $\gb\le 1$ we have
\begin{equation}
p(\gb)\ge -C\gb^4 ((\log \gb)^2+1).
\end{equation}
\end{proposition}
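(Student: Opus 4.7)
My plan is to prove the lower bound via a coarse-graining construction that reduces to oriented percolation on $\Z_+\times\Z$, with ``goodness'' of each percolation cell driven by a second-moment estimate on a restricted partition function. The scale of the coarse graining has to be calibrated so that two competing constraints — having the second moment of a block weight close to one, and having the percolation parameter above the critical threshold — can be simultaneously satisfied, which is exactly what forces the appearance of the $(|\log\gb|+1)^2$ factor in the final bound.

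\textbf{Set-up.} Fix a mesoscopic time scale $n$ of order $1/(\gb^4(|\log\gb|+1)^2)$, so that $\gb^2\sqrt n$ is small. Partition spacetime into cells $((k-1)n,kn]\times I_y$ with $I_y := [y\sqrt n,(y+1)\sqrt n)$. For each cell $(k,y)$ and each ``entrance'' coordinate $y'$ with $|y-y'|\le K$ (a fixed constant), define the restricted block weight $W^{(k,y,y')}$ as the normalized partition function of polymer paths entering $I_{y'}$ at time $(k-1)n$ and exiting through $I_y$ at time $kn$; with the appropriate normalization, $Q W^{(k,y,y')}=1$.

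\textbf{Second moment.} The standard identity gives
\[ Q\bigl(W^{(k,y,y')}\bigr)^2 \le C\, P^{\otimes 2}\bigl[\exp\!\bigl((\gl(2\gb)-2\gl(\gb))L_n\bigr)\bigr], \]
with $L_n = \sum_{i=1}^n \ind_{\{S^1_i=S^2_i\}}$ the overlap of two independent simple random walks. Since $\gl(2\gb)-2\gl(\gb) = O(\gb^2)$ and $L_n = O(\sqrt n)$, our choice of $n$ forces the right-hand side to be $1+o(1)$. Paley--Zygmund then yields $Q\{W^{(k,y,y')}\ge \theta_0\} \ge 1-o(1)$ for any fixed $\theta_0\in(0,1)$.

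\textbf{Good blocks and oriented percolation.} I declare the cell $(k,y)$ \emph{good} if the joint event ``$W^{(k,y,y')}\ge \theta_0$ for every $y'$ with $|y-y'|\le K$'' holds. By a second-moment computation on the family $\{W^{(k,y,y')}\}_{|y-y'|\le K}$ (which are not independent, but measurable with respect to the same environment slab and positively correlated), the probability that $(k,y)$ is good can be made to exceed the critical threshold $p_c^{\mathrm{or}}$ of oriented site percolation on $\Z_+\times\Z$. Goodness of $(k,y)$ depends only on the environment in the $k$th time slab, so different columns are independent and the within-column dependence is of finite range; Liggett--Schonmann--Stacey then dominates the field of good cells from below by a supercritical Bernoulli oriented percolation, producing $Q$-a.s.\ an infinite directed open path $\{(k,Y_k)\}_{k\ge 0}$ with $|Y_k-Y_{k-1}|\le K$.

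\textbf{Conclusion.} Restricting $Z_N$ to polymer trajectories that follow this open path contributes at least $\theta_0\cdot Q W^{(k,Y_k,Y_{k-1})}\ge c\theta_0$ per block, since the entrance probabilities $P(S_n\in I_{Y_k-Y_{k-1}})$ between adjacent intervals are bounded below by a positive constant (local CLT). Concatenating over $N/n$ blocks yields $W_N \ge (c\theta_0)^{N/n}$ almost surely, so $p(\gb) \ge (\log c\theta_0)/n \ge -C\gb^4(|\log\gb|+1)^2$.

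\textbf{Main obstacle.} The heart of the argument is the good-block construction: while the \emph{individual} Paley--Zygmund bound is comfortable, the \emph{joint} event over many $y'$ must have probability above $p_c^{\mathrm{or}}$. This requires estimating the cross moments $Q[W^{(k,y,y')}W^{(k,y,y'')}]$, which involve overlaps of two walks with distinct starting points and must all be controlled simultaneously for $|y-y'|, |y-y''|\le K$. The logarithmic factor in the definition of $n$ is exactly what is needed to make this joint control work; a more refined choice of the percolation events could in principle remove one power of the logarithm, matching the conjectured $-c\gb^4$ upper bound from Theorem \ref{pasgaussi}.
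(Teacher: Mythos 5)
Your overall strategy — coarse-grain to a mesoscopic time scale $n$, use a second-moment argument to declare cells ``good,'' and then extract an infinite directed open path via oriented percolation — is indeed the same as the paper's. However, there is a genuine gap at the heart of the percolation step, and a couple of smaller misstatements.

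\textbf{The gap: independence of the block weights.} The paper's restricted block weight $\bar W$ is the contribution of paths that go \emph{corner to corner and stay strictly inside} the cell ($S_n=\sqrt n$, $0<S_i<\sqrt n$). This confinement is what makes $\bar W_i^{(y,y\pm 1)}$ depend only on the environment in its own cell, hence gives a family of genuinely \emph{independent} random variables indexed by disjoint cells, so one gets honest i.i.d.\ oriented percolation with no domination lemma needed. Your $W^{(k,y,y')}$ fixes only the entrance and exit intervals but places no constraint on what the path does in between. As a result, $W^{(k,y,y')}$ is measurable with respect to the environment in the entire time slab $((k-1)n,kn]\times\Z$, not in a bounded spatial window; two cells $(k,y)$ and $(k,y'')$ at the same time level are dependent however far apart $y$ and $y''$ are (the dependence range is of order $\sqrt n$ cells, which grows with $n$, not a finite constant). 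Your ``within-column dependence is of finite range'' is therefore false as stated, and the Liggett--Schonmann--Stacey domination is not applicable without an additional confinement step. If you do add confinement to recover finite-range (or exact) independence, you must then re-do the variance estimate \emph{conditionally} on that confinement event, which is precisely what the paper's Lemma \ref{th:lem2} does — and it is this conditioning that inflates the typical overlap from $O(\sqrt n)$ to $O(\sqrt{n\log n})$ and forces $n\lesssim\gb^{-4}/|\log\gb|$.

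\textbf{Attribution of the logarithm.} You attribute the $(|\log\gb|+1)^2$ factor to the joint control over the finitely many $y'$ with $|y-y'|\le K$, but since $K$ is a fixed constant this can cost at most a constant in the percolation parameter. In the paper the logarithm comes from two distinct sources, neither of which is a joint-union bound: (i) the conditional variance in Lemma \ref{th:lem2} requires $n\le c_\gep/(\gb^4|\log\gb|)$ because of the cost $P(A_n)\sim n^{-3}$ of the confinement event; and (ii) the per-block reward is $\log Q\bar W \sim -\tfrac{3}{2}\log n$ (Lemma \ref{th:lem1}), so the final rate is $\sim -(\log n)/n\sim -\gb^4(\log\gb)^2$. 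Your normalized blocks (with $QW^{(k,y,y')}=1$ and contributions of order one per block) would in fact give a rate of order $-1/n$, and the unconditioned second moment only requires $n\ll\gb^{-4}$, so it is not clear why you put the extra $\log$ into the scale at all — and without confinement the argument does not close anyway.

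\textbf{Two minor points.} Paley--Zygmund with $QW=1$ and $QW^2=1+\eta$ gives $Q\{W\ge\theta_0\}\ge(1-\theta_0)^2/(1+\eta)$, which for $\theta_0$ fixed away from $0$ is \emph{not} close to $1$ even as $\eta\to 0$; you want Chebyshev, $Q\{|W-1|\ge 1-\theta_0\}\le\eta/(1-\theta_0)^2$, as the paper uses. And supercritical oriented percolation only gives an infinite open cluster from the origin with \emph{positive} $Q$-probability, not a.s.; the paper then concludes because the free energy $p(\gb)$ is a non-random quantity, so a lower bound holding on a positive-probability event holds deterministically. You should make that last step explicit.
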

\medskip
We use two technical lemmas to prove the result. The first is just a statement about scaling of the random walk, the second is more specific to our problem.

\medskip
\begin{lemma}\label{th:lem1}
There exists an a constant $c_{RW}$ such that for large even squared integers $n$,
\begin{equation}
P\{S_n=\sqrt{n},0<S_i<\sqrt{n} \text{ for } 0<i<n \}= c_{RW} n^{-3/2}+o(n^{-3/2}).
\end{equation}

\end{lemma}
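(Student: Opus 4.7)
The plan is to use the spectral decomposition of the simple random walk killed on exiting $(0,L)$, with $L=\sqrt n$. First, since we need $S_i\in(0,L)$ for $0<i<n$ and the walk has unit steps, necessarily $S_1=1$ and $S_{n-1}=L-1$. Conditioning on these two events (each of probability $1/2$) and shifting time, the desired probability equals
\begin{equation*}
\frac14\, P_1\bigl(\tau>n-2,\ S_{n-2}=L-1\bigr),
\end{equation*}
where under $P_1$ the walk starts at $1$ and $\tau:=\inf\{j\geq 0:S_j\in\{0,L\}\}$ is the first exit time from $\{1,\ldots,L-1\}$.

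Second, I invoke the well-known spectral decomposition of the transition matrix of the walk killed on $\{1,\ldots,L-1\}$: its eigenvalues are $\lambda_k=\cos(k\pi/L)$ with orthonormal eigenvectors $v_k(x)=\sqrt{2/L}\,\sin(k\pi x/L)$, $k=1,\ldots,L-1$. This gives
\begin{equation*}
P_1\bigl(\tau>n-2,S_{n-2}=L-1\bigr)=\frac{2}{L}\sum_{k=1}^{L-1}\cos^{n-2}(k\pi/L)\,\sin(k\pi/L)\,\sin(k\pi(L-1)/L).
\end{equation*}
The identity $\sin(k\pi(L-1)/L)=(-1)^{k+1}\sin(k\pi/L)$ simplifies this to
\begin{equation*}
\frac{2}{L}\sum_{k=1}^{L-1}(-1)^{k+1}\cos^{n-2}(k\pi/L)\,\sin^2(k\pi/L).
\end{equation*}

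Third, I carry out the asymptotic analysis as $n\to\infty$ with $L=\sqrt n$. For fixed $k$, the Taylor expansions $\cos(k\pi/L)=1-k^2\pi^2/(2n)+O(n^{-2})$ and $\sin(k\pi/L)=k\pi/\sqrt n+O(n^{-3/2})$ yield $\cos^{n-2}(k\pi/L)\to e^{-k^2\pi^2/2}$ and $\sin^2(k\pi/L)\sim k^2\pi^2/n$, so the $k$-th term behaves like $(-1)^{k+1}\cdot 2\pi^2k^2\,e^{-k^2\pi^2/2}/n^{3/2}$. Combining this with the prefactor $1/4$ and summing yields the claimed asymptotic with
\begin{equation*}
c_{RW}=\frac{\pi^2}{2}\sum_{k=1}^{\infty}(-1)^{k+1}k^2 e^{-k^2\pi^2/2}>0,
\end{equation*}
the positivity coming from the rapid dominance of the $k=1$ term.

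The main technical obstacle is justifying the termwise passage to the limit in the spectral sum, since we are letting the number of eigenvalues grow with $n$. I would handle this via dominated convergence: split the sum at $k=L/2$. For $k\leq L/2$ one uses $|\cos(k\pi/L)|\leq \exp(-2k^2/L^2)$ (valid for $|x|\leq \pi/2$) to bound the summand by $C\,k^2 e^{-c k^2}/n^{3/2}$, uniformly in $n$. For $L/2<k\leq L-1$ the factor $|\cos(k\pi/L)|^{n-2}$ is bounded by $\cos^{n-2}(\pi/4)\leq 2^{-(n-2)/2}$, which is negligible compared to $n^{-3/2}$. The convergent envelope $\sum k^2 e^{-ck^2}$ then provides the required domination.
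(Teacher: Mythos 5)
Your route differs from the paper's — the paper reduces to a conditioning on $T_0=n$ and then invokes a Donsker-type invariance principle (convergence of the conditioned walk to the normalized Brownian excursion) together with the known asymptotic $P(T_0=n)\sim\sqrt{2/\pi}\,n^{-3/2}$, whereas you use the explicit spectral decomposition of the killed-walk transition matrix. Your reduction to $\frac14 P_1(\tau>n-2,\ S_{n-2}=L-1)$, the eigendecomposition, the termwise limits, and the domination for $k\le L/2$ are all correct.

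There is, however, a genuine error in the tail estimate for $L/2<k\le L-1$. You claim $|\cos(k\pi/L)|^{n-2}\le\cos^{n-2}(\pi/4)$ there, but this is false: as $k\to L-1$ one has $k\pi/L\to\pi$, so $|\cos(k\pi/L)|\to 1$. In fact $|\cos((L-1)\pi/L)|=\cos(\pi/L)=1-O(1/L^2)$, hence $|\cos((L-1)\pi/L)|^{n-2}\approx e^{-\pi^2/2}$ is of order one, not exponentially small. The terms with $k$ near $L-1$ therefore contribute fully. The correct way to handle the tail is via the symmetry $k\leftrightarrow L-k$: since $n$ is an even square, $L=\sqrt n$ is even and $n-2$ is even, so $\cos^{n-2}((L-k)\pi/L)=(-\cos(k\pi/L))^{n-2}=\cos^{n-2}(k\pi/L)$, $\sin((L-k)\pi/L)=\sin(k\pi/L)$, and $(-1)^{(L-k)+1}=(-1)^{k+1}$; thus the $(L-k)$-th summand equals the $k$-th summand, the term $k=L/2$ vanishes since $\cos(\pi/2)=0$, and the sum over $k>L/2$ exactly duplicates the sum over $k<L/2$. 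This doubles your constant: the correct value in your parametrization is $c_{RW}=\pi^2\sum_{k\ge1}(-1)^{k+1}k^2e^{-k^2\pi^2/2}$, not $\tfrac{\pi^2}{2}\sum_{k\ge1}(-1)^{k+1}k^2e^{-k^2\pi^2/2}$. Since the lemma only asserts existence of some $c_{RW}>0$ this does not sink the conclusion, but as written your dominated-convergence argument does not cover the high-$k$ range, so the proof is incomplete without this correction.
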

\medskip

\begin{lemma}\label{th:lem2}
For any $\gep>0$ we can find a constant $c_{\gep}$ and $\gb_0$ such that for all $\gb\le \gb_{0}$,
for every even squared integer $n\le c_{\gep}/(\gb^4|\log \gb|)$ we have
\begin{equation}
 \var_Q\left[P \left(\exp\left ( \sum_{i=1}^{n-1}\left(\gb\go_{i,S_i}-\gl(\gb)\right)\right)\ \bigg| \ S_n=\sqrt{n}, 0<S_i<\sqrt{n} \text{ for } 0<i <n \right) \right]<\gep.
\end{equation}
\end{lemma}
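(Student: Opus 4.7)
The plan is a standard second-moment computation. Set $A_n:=\{S_n=\sqrt n,\ 0<S_i<\sqrt n\text{ for }0<i<n\}$ and $\hat P(\cdot):=P(\cdot\mid A_n)$, so the quantity appearing inside the variance is
\begin{equation*}
Z:=\hat P\exp\left(\sum_{i=1}^{n-1}\left(\gb\go_{i,S_i}-\gl(\gb)\right)\right).
\end{equation*}
The definition of $\gl$ makes $Q[Z]=1$ trajectory by trajectory, so $\var_Q Z=Q[Z^2]-1$. Taking two independent conditioned walks $S^{(1)},S^{(2)}$ and exploiting the independence of the $\go_{(i,z)}$, one gets
\begin{equation*}
Q[Z^2]=\hat P^{\otimes 2}\left[\exp\left(\kappa(\gb)\,L_n\right)\right],
\end{equation*}
with $\kappa(\gb):=\gl(2\gb)-2\gl(\gb)$ and $L_n:=\#\{0<i<n:\,S^{(1)}_i=S^{(2)}_i\}$. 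Since $\gl(x)=x^2/2+O(x^3)$ near zero (because $\go$ has unit variance), $\kappa(\gb)\le 2\gb^2$ for $\gb$ small enough, and it suffices to prove $\hat P^{\otimes 2}[\exp(2\gb^2 L_n)]<1+\gep$.

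For this I would Taylor-expand,
\begin{equation*}
\hat P^{\otimes 2}\left[\exp(2\gb^2 L_n)\right]-1=\sum_{k\ge 1}\frac{(2\gb^2)^k}{k!}\hat P^{\otimes 2}[L_n^k],
\end{equation*}
and establish a factorial moment bound of the form $\hat P^{\otimes 2}[L_n^k]\le k!\,(c_1\sqrt n\,\sqrt{|\log n|})^k$. Under the hypothesis $n\le c_\gep/(\gb^4|\log\gb|)$, this turns the right-hand side into a geometric series with ratio of order $1/\sqrt{|\log\gb|}$, whose total can be made arbitrarily small by adjusting $c_\gep$ and $\gb_0$. To prove the moment bound I would write
\begin{equation*}
\hat P^{\otimes 2}[L_n^k]=k!\sum_{0<i_1<\dots<i_k<n}\hat P^{\otimes 2}\left[S^{(1)}_{i_j}=S^{(2)}_{i_j}\ \forall j\right],
\end{equation*}
undo the conditioning using $P(A_n)\sim c_{RW}\,n^{-3/2}$ from Lemma \ref{th:lem1}, and estimate each joint meeting probability by iterated Markov property and the local central limit theorem for the simple random walk killed outside the strip.

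The main obstacle is carrying out this moment estimate under the doubly conditioned law with sharp constants. The endpoint and strip conditioning pulls the two walks together near times $0$ and $n$, inflating the local meeting probability there by polynomial factors of the type $\sqrt{n/(i(n-i))}$ when compared to the unconditioned intersection kernel. Controlling these boundary factors cleanly while preserving the factorial $k!$ (which is what turns the Taylor series into a summable geometric one rather than a divergent sum) is the delicate technical point; the logarithmic factor in the hypothesis $n\le c_\gep/(\gb^4|\log\gb|)$ is exactly the slack needed to absorb the cumulative effect of these boundary inflations when summing over ordered $k$-tuples of meeting times.
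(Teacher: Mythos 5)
Your reduction of the variance to $\hat P^{\otimes 2}[\exp(\kappa(\gb) L_n)]-1$ with $\kappa(\gb)=\gl(2\gb)-2\gl(\gb)$ and $L_n$ the number of coincidence times is exactly the paper's starting point (equation (5.18) there), so up to that step the two arguments agree. Where you diverge is in how the exponential moment of $L_n$ under the conditioned law is controlled. You propose to Taylor expand and bound the moments $\hat P^{\otimes 2}[L_n^k]$, and you correctly identify that this forces you to estimate $k$-fold coincidence probabilities for walks conditioned by $A_n$, where the endpoint and strip constraints distort the intersection kernel near the time boundaries. That is indeed the hard part, and your proposal stops there with only an announced (not proved) bound $\hat P^{\otimes 2}[L_n^k]\le k!\,(c_1\sqrt{n\log n})^k$. (A side note: the display you write, $\hat P^{\otimes 2}[L_n^k]=k!\sum_{i_1<\dots<i_k}\hat P^{\otimes 2}[\cap_j\{S^{(1)}_{i_j}=S^{(2)}_{i_j}\}]$, is the $k$-th \emph{factorial} moment, not the $k$-th moment; this is harmless but should be fixed, e.g.\ by writing $\hat P^{\otimes 2}[e^{\kappa L_n}]=\sum_k \frac{(e^\kappa-1)^k}{k!}\hat P^{\otimes 2}[(L_n)_k]$ since $L_n$ is integer-valued.)

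The paper's proof avoids the obstacle you flag by never estimating anything directly under the conditioned law. It first derives an unconditioned tail bound for the renewal $\tau=\{i:S^{(1)}_i=S^{(2)}_i\}$ via the Laplace transform of the return time, giving $P^{\otimes 2}\{|\tau\cap[1,n-1]|\ge k\}\le \exp(-k^2/(32n))$ in the relevant range of $k$. It then passes to the conditional tail by the trivial inequality $P(\cdot\mid A_n)\le \min\bigl(1,\,P(\cdot)/P(A_n)^{2}\bigr)$, absorbing the polynomial cost $P(A_n)^{-2}\sim c^{-2}_{RW}n^{3}$ into a slightly larger threshold $k_1\sim\sqrt{n\log n}$, and finally computes $\hat P^{\otimes 2}[e^{\gga(\gb) L_n}]-1$ by integration by parts on the tail, split into three ranges. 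The $|\log\gb|$ factor in the hypothesis is exactly what is needed so that $\gga(\gb)k_1\sim\gb^2\sqrt{n\log n}$ is small; it compensates the $n^3$ from the crude conditioning, not (as you conjecture) boundary inflation of coincidence probabilities. So the two approaches differ genuinely from the second step on: the paper trades a moment computation (which would require fine control of the conditioned walk) for a tail computation, where the conditioning can be handled by dividing by $P(A_n)^2$ and capping at $1$, a maneuver that is cheap for tails but would be fatal if applied term by term to moments. Your route is plausible in principle, but as written it leaves the key estimate open, and that estimate is precisely the one the paper's argument is designed to circumvent.
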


\begin{proof}[Proof of Proposition \ref{th:prop} from Lemma \ref{th:lem1} and \ref{th:lem2}]
 
Let $n$ be some fixed integer and define

\begin{equation}
\bar W:= P \exp\left (\sum_{i=1}^{n-1}\left(\gb\go_{i,S_i}-\gl(\gb)\right)\right)\ind_{\{S_n=\sqrt{n}, 0<S_i<\sqrt{n} \text{ for } 0<i <n \}},
\end{equation}
which corresponds to the contribution to the partition function $W_n$ of paths with fixed end point $\sqrt{n}$ staying within a cell of width $\sqrt{n}$, with the specification the environment on the last site is not taken in to account. $\bar W$ depends only of the value of the environment $\go$ in this cell (see figure \ref{fig:restr}).

\begin{figure}[h]
\begin{center}
\leavevmode
\epsfysize =4.5 cm
\psfragscanon
\psfrag{O}[c]{O}
\psfrag{n1}[c]{$n$}
\psfrag{n2}[c]{$\sqrt{n}$}
\epsfbox{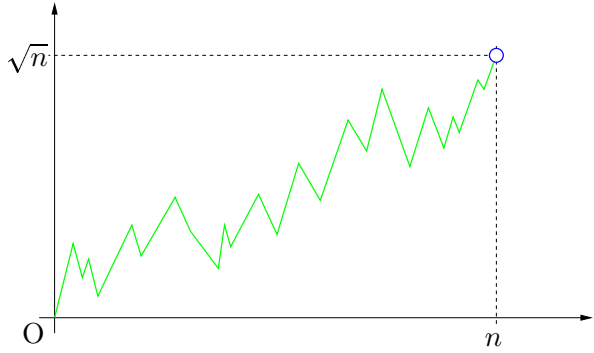}
\caption{\label{fig:restr} We consider a resticted partition function $\bar W$ by considering only paths going from one to the other corner of the cell, without going out. This restriction will give us the independence of random variable corresponding to different cells which will be crucial to make the proof works.}
\end{center}
\end{figure}
One also defines the following quantities for $(i,y)\in \N\times \Z$:

\begin{equation}\begin{split}
 \bar W_i^{(y,y+1)}&:=P_{\sqrt{n}y}\left[ e^{\sum_{k=1}^{n-1}\left[\gb \go_{in+k,S_k}-\gl(\gb)\right]}\ind_{\{S_n=\sqrt{(y+1)n}, 0<S_i-y \sqrt n<\sqrt{n} \text{ for } 0<i <n\}}\right],\\
 \bar W_i^{(y,y-1)}&:=P_{\sqrt{n}y}\left[\ e^{\sum_{k=1}^{n-1}\left[\gb \go_{in+k,S_k}-\gl(\gb)\right]}\ind_{\{S_n=(y-1)\sqrt{n}, -\sqrt{n}<S_i-y\sqrt{n}<0 \text{ for } 0<i <n\}}\right].
\end{split}\end{equation}
which are random variables that have the same law as $\bar W$. Moreover because of independence of the environment in different cells, one can see that
 \begin{equation*}
 \left(\bar W_i^{(y,y\pm 1)};\ (i,y)\in \N\times \Z \text{ such that } i-y \text { is even} \right),
\end{equation*}
is a family of independent variables.

Let $N=nm$ be a large integer. We define $\Omega=\Omega_N$ as the set of path
\begin{equation}
\Omega:=\{ S \ : \ \forall i\in [1,m],\ |S_{in}-S_{(i-1)n}|=\sqrt{n},\ \forall j\in [1,n-1],\  S_{(i-1)n+j}\in \left( S_{(i-1)n},S_{in}\right)\},
\end{equation}
where the interval $\left( S_{i(n-1)},S_{in}\right)$ is to be seen as $\left( S_{in},S_{i(n-1)}\right)$ if $S_{in}<S_{i(n-1)}$, and 
\begin{equation}
 \mathcal S:=\left\{s=(s_0,s_1,\dots,s_m)\in \Z^{m+1}\ :\  s_0=0 \text{ and } |s_i-s_{i-1}|=1, \ \forall i \in[1,m] \right\} .
\end{equation}
We use the trivial bound

\begin{equation}
W_N\ge P \left[\exp\big(\sum_{i=1}^{nm}(\gb\go_{i,S_i}-\gl(\gb))\big)\ind_{\{S\in\Omega\}}\right],
\end{equation}
to get that 

\begin{equation}\label{trtrtr}
 W_N\ge \sum_{s\in \mathcal S} \prod_{i=0}^{m-1} \bar W_{i}^{(s_{i},s_{i+1})}\exp\left(\gb\go_{(i+1)n,s_{i+1}\sqrt{n}}-\gl(\gb)\right).
\end{equation}
(the exponential term is due to the fact the $\bar W$ does not take into account to site in the top corner of each cell).

The idea is of the proof is to find a value of $n$ (depending on $\gb$) such that we are sure that for any value of $m$ we can find a path $s$ such that along the path the values of $(\bar W_{i}^{(s_{i},s_{i+1})})$ are not to low (i.e. close to the expectation of $\bar W$) and to do so, it seems natural to seek for a percolation argument.

Let $p_c$ be the critical exponent for directed percolation in dimension $1+1$ (for an account on directed percolation see \cite[Section 12.8]{perc} and references therein). From Lemma \ref{th:lem2} and Chebyshev inequality, one can find a constant $C_8$ and $\gb_0$ such that for all $n\le\frac{C_8}{\gb^4|\log\gb|}$ and $\gb\le \gb_0$.
\begin{equation}
 Q \{\bar W \ge  Q \bar W /2\}\ge \frac{p_c+1}{2}. \label{eq:perco}
\end{equation}
We choose $n$ to be the biggest squared even integer that is less than $\frac{C_8}{\gb^4|\log\gb|}$.
(in particular have $n\ge \frac{C_8}{2\gb^4|\log\gb|}$ if $\gb$ small enough).

As shown in figure \ref{fig:wnnn}, we associate to our system the following directed percolation picture. For all $(i,y)\in \N\times \Z \text{ such that } i-y$ is even:
\begin{itemize}
 \item If $\bar W_i^{(y,y\pm 1)}\ge (1/2)Q \bar W$, we say that the edge linking the opposite corners of the corresponding cell is open.
 \item If $\bar W_i^{(y,y\pm 1)}< (1/2)Q \bar W$, we say that the same edge is closed.
\end{itemize}
Equation \eqref{eq:perco} and the fact the considered random variables are independent assures that with positive probability there exists an infinite directed path starting from zero.

\begin{figure}[h]
\begin{center}
\leavevmode
\epsfysize =6.5 cm
\psfragscanon
\psfrag{O}[c]{\tiny{O}}
\psfrag{n}[c]{\tiny{$n$}}
\psfrag{2n}[c]{\tiny{$2n$}}
\psfrag{3n}[c]{\tiny{$3n$}}
\psfrag{4n}[c]{\tiny{$4n$}}
\psfrag{5n}[c]{\tiny{$5n$}}
\psfrag{6n}[c]{\tiny{$6n$}}
\psfrag{7n}[c]{\tiny{$7n$}}
\psfrag{8n}[c]{\tiny{$8n$}}
\psfrag{rn}[c]{\tiny{$+\sqrt{n}$}}
\psfrag{r2n}[c]{\tiny{$+2\sqrt{n}$}}
\psfrag{r3n}[c]{\tiny{$+3\sqrt{n}$}}
\psfrag{r4n}[c]{\tiny{$+4\sqrt{n}$}}
\psfrag{m1n}[c]{\tiny{$-\sqrt{n}$}}
\psfrag{m2n}[c]{\tiny{$-2\sqrt n$}}
\psfrag{m3n}[c]{\tiny{$-3\sqrt n$}}
\psfrag{1}[c]{\tiny{$1$}}
\psfrag{2}[c]{\tiny{$2$}}
\psfrag{3}[c]{\tiny{$3$}}
\psfrag{4}[c]{\tiny{$4$}}
\psfrag{5}[c]{\tiny{$5$}}
\psfrag{6}[c]{\tiny{$6$}}
\psfrag{7}[c]{\tiny{$7$}}
\psfrag{8}[c]{\tiny{$8$}}
\epsfbox{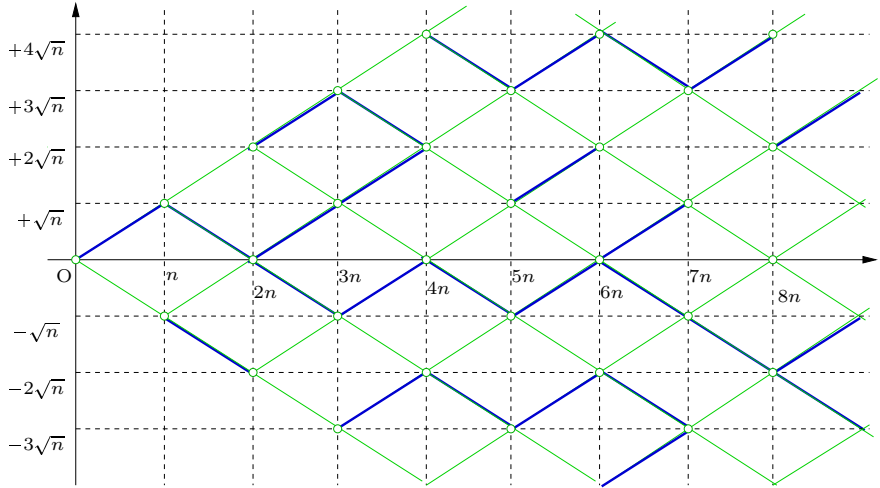}
\end{center}
\caption{\label{fig:wnnn} This figure illustrates the percolation argument used in the proof. To each cell is naturally associated a random variable $\bar W_i^{y,y\pm 1}$, and these  random variables are i.i.d. When $\bar W_i^{y,y\pm 1}\ge 1/2 Q \bar W$ we open the edge in the corresponding cell (thick edges on the picture). As this happens with a probability strictly superior to $p_c$, we have a positive probability to have an infinite path linking $0$ to infinity.}
\end{figure}

When there exists an infinite open path is linking zero to infinity exists, we can define the highest open path in an obvious way. Let $(s_i)_{i=1}^m$ denotes this highest path. If $m$ is large enough, by law of large numbers we have that with a probability close to one,
\begin{equation}
\sum_{i=1}^m \left[\gb\go_{ni,\sqrt{n}s_i}-\gl(\gb)\right] \ge -2m\gl(\gb). 
\end{equation}

Using this and and the percolation argument with \eqref{trtrtr} we finally get that with a positive probability which does not depend on $m$ we have

\begin{equation}
 W_{nm}\ge \left[(1/2)e^{-2\gl(\gb)}Q\bar W\right]^m.
\end{equation}
Taking the $\log$ and making $m$ tend to infinity this implies that

\begin{equation}
 p(\gb)\ge \frac{1}{n}\left[-2 \gl(\gb)-\log 2+ \log Q \bar W\right]\ge -\frac{c}{n}\log n.
\end{equation}
For some constant $c$, if $n$ is large enough (we used Lemma \ref{th:lem1} to get the last inequality.
The result follows by replacing $n$ by its value.




\end{proof}
\medskip

\begin{proof}[Proof of Lemma \ref{th:lem1}]

Let $n$ be square and even. $T_k$, $k\in \Z$ denote the first hitting time of $k$ by the random walk $S$ (when $k=0$ it denotes the return time to zero). 
We have
\begin{multline}
P\{S_n=\sqrt{n},0<S_i<\sqrt{n}, \text{ for all } 1<i<n\}\\
=\sum_{k=1}^{n-1} P\{T_{\sqrt{n}/2}=k,\ S_j>0 \text{ for all } j<n \text{ and } T_{\sqrt{n}}=n\}\\
= P\{T_{\sqrt{n}/2}<n,\ S_j<\sqrt{n} \text{ for all } j<n \text{ and } T_{0}=n)\},
\end{multline}
where the second equality is obtained with the strong Markov property used for $T=T_{\sqrt{n}/2}$, and the reflexion principle for the random walk.
The last line is equal to
\begin{equation}\label{eq:hop}
P\{\max_{k\in [0,n]} S_k\in [\sqrt{n}/2,\sqrt{n})|T_0=n\}P\{T_0=n\}.
\end{equation}
We use here a variant of Donsker's Theorem, for a proof see \cite[Theorem 2.6]{K}.
\medskip

\begin{lemma}
The process
\begin{equation}
t\mapsto \left\{\frac {S_{\lceil nt \rceil }}{\sqrt{n}}\ \bigg| \ T_0=n\right\}, \quad t\in[0,1]
\end{equation}
converges in distribution to the normalized Brownian excursion in the space $D([0,1],\R)$.
\end{lemma}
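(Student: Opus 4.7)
The plan is to prove this lemma as a discrete analogue of Vervaat's representation of the Brownian excursion in terms of the Brownian bridge. The argument splits into three ingredients: Donsker's invariance principle for the bridge, a discrete Vervaat cyclic shift, and its continuous counterpart.

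First, I would establish that, conditionally on $S_n=0$ (with $n$ even, so the conditioning is non-degenerate), the rescaled path $(S_{\lceil nt\rceil}/\sqrt{n})_{t\in[0,1]}$ converges in $D([0,1],\R)$ to the standard Brownian bridge. This is a classical refinement of Donsker's theorem: on finite-dimensional marginals it follows from the local central limit theorem, which identifies the Radon--Nikodym derivative of the bridge law with respect to the free walk law at prescribed marginal points, while tightness of the conditioned family is inherited from tightness of the free walk together with the lower bound $P\{S_n=0\}\gtrsim n^{-1/2}$.

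Next, I would introduce a discrete Vervaat transform on bridges. Given a nearest-neighbor path $(S_i)_{0\le i\le n}$ with $S_n=0$, let $\tau$ be the first index at which $\min_i S_i$ is attained and set $\tilde S_k := S_{(k+\tau)\bmod n}-S_\tau$. The resulting $\tilde S$ is nonnegative and vanishes at both endpoints, and the choice of $\tau$ as the first minimum guarantees $\tilde S_k>0$ for $0<k<n$, so $\tilde S$ is a discrete positive excursion. A counting argument over the orbits of the cyclic group of order $n$ then shows that this map pushes the uniform bridge law forward to the uniform excursion law, hence to the law of $S$ conditioned on $T_0=n$, up to the sign symmetry which selects the positive branch.

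The third ingredient is the continuous Vervaat theorem: if $(B_t)_{t\in[0,1]}$ is a standard Brownian bridge and $\sigma\in[0,1]$ is its almost surely unique argmin, then $(B_{(\sigma+t)\bmod 1}-B_\sigma)_{t\in[0,1]}$ has the law of the normalized Brownian excursion. This transformation is continuous at any path admitting a unique minimum, so the continuous mapping theorem applied to the convergence from the first step yields the lemma, provided one checks that the discrete cyclic shift converges to its continuous analogue under the scaling. The main obstacle is precisely this matching: one must show that the rescaled location of the discrete argmin converges in probability to the unique argmin of the limiting Brownian bridge, and that the shift operation is jointly continuous in the path and in the shift parameter on the set of paths with a unique minimum. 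Both facts are standard once the almost-sure uniqueness of the Brownian bridge minimum is granted.
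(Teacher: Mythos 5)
The paper does not prove this lemma: it is a citation to Kaigh (1976), \cite{K}, whose argument is direct, establishing tightness of the conditioned walks and convergence of the finite-dimensional distributions via the local central limit theorem. Your route through Vervaat's transform is therefore a genuinely different strategy, and it is a natural one, but the discrete Vervaat step as you state it is incorrect.

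Cyclically shifting a simple random walk bridge to its first argmin does not in general produce a strictly positive path. The minimum of a bridge of length $n$ is attained at several times with probability bounded away from zero as $n\to\infty$, and if the minima sit at $\tau_1<\tau_2<\dots<\tau_k$, the shifted path has interior zeros at $\tau_2-\tau_1,\dots,\tau_k-\tau_1$. Your cyclic-orbit count also fails: the number of bridges mapping to a given weakly positive pinned path equals $n-t$, where $t$ is the location of its last interior zero, so the pushforward of the uniform bridge law is neither uniform nor supported on strict excursions, and in particular is not the law of $S$ conditioned on $\{T_0=n\}$. The strategy can be repaired, but not for free: one must show that the spread $\tau_k-\tau_1$ of the set of minima is $o(n)$ in probability (this is where the almost sure uniqueness of the Brownian bridge minimum does real work, beyond merely ensuring continuity of the continuous Vervaat map), deduce that the shifted bridge still converges to the Brownian excursion, and then pass separately from the shifted bridge to the genuinely conditioned excursion, for instance by observing that the shifted bridge conditioned to be strictly positive is exactly the uniform excursion law, that this event has probability tending to $1/2$, and that conditioning on it is harmless in the scaling limit. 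Once those pieces are supplied the argument is at least as long as Kaigh's direct proof, which is presumably why the paper simply cites it.
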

\medskip
We also know that (see for example \cite[Proposition A.10]{Book}) for $n$ even $P(T_0=n)=\sqrt{2/\pi}n^{-3/2}+o(n^{-3/2})$.
Therefore, from  \eqref{eq:hop} we have
\begin{multline}
P\{S_n=\sqrt{n},0<S_i<\sqrt{n}, \text{ for all } 1<i<n\}\\
= \sqrt{2/\pi}n^{-3/2}\bbP\left[\max_{t\in[0,1]} {\bf e}_t\in (1/2,1)\right]+o(n^{-3/2}).
\end{multline}
Where $\bf e$ denotes the  normalized Brownian excursion, and $\bbP$ its law.
\end{proof}
\medskip

\begin{proof}[Proof of Lemma \ref{th:lem2}]

Let $\gb$ be fixed and  small enough, and $n$ be some squared even integer which is less than $c_\gep/(\gb^4|\log \gb|)$. We will fix the value $c_{\gep}$ independently of $\gb$ later in the proof, and always consider that $\gb$ is sufficiently small.
By a direct computation the variance of

\begin{equation}
P\left[\exp\left( \sum_{i=1}^{n-1}[\gb\go_{i,S_i}-\gl(\gb)]\right)\bigg| \ S_n=\sqrt{n}, 0<S_i<\sqrt{n} \text{ for } 0<i <n \right] 
\end{equation}
is equal to

\begin{equation}\label{eq:abcd}
 P^{\otimes 2}\left[\exp \left(\sum_{i=1}^{n-1}\gga(\gb)\ind_{\{S^{(1)}_i=S^{(2)}_i\}}\right)\ \bigg| \ A_n\right]-1.
\end{equation}
where

\begin{equation}
 A_n=\left\{S_n^{(j)}=\sqrt{n}, 0<S^{(j)}_i<\sqrt{n} \text{ for } 0<i<n,\ j=1,2\right\},
\end{equation}
and $\gga(\gb)=\gl(2\gb)-2\gl(\gb)$ (recall that $\gl(\gb)=\log Q \exp(\gb\go_{(1,0)})$), and $S_n^{(j)}$, $j=1,2$ denotes two independent random walk with law denoted by $P^{\otimes 2}$.
From this it follows that if $n$ is small the result is quite straight--forward. We will therefore only be interested in the case of large $n$ (i.e.\ bounded away from zero by a large fixed constant).

We define $\tau=\left(\tau_k\right)_{k\ge 0}=\{S^{(1)}_i=S^{(2)}_i,i\ge 0\}$  the set where the walks meet (it can be written as an increasing sequence of integers). By the Markov property, the random variables $\tau_{k+1}-\tau_k$ are i.i.d.\ , we say that $\tau$ is a {\sl renewal sequence}.
\medskip

We want to bound the probability that the renewal sequence $\tau$ has too many returns before times $n-1$, in order to estimate \eqref{eq:abcd}. To do so, we make the usual computations with Laplace transform.

From \cite[p. 577]{Feller} , we know that
\begin{equation}\label{Lapl}
1- P^{\otimes 2} \exp(- x \tau_1)= \frac{1}{\sum_{n\in \N} \exp(-xn)P\{S_n^{(1)}=S_n^{(2)}\}}.
\end{equation}
Thanks to the the local central limit theorem for the simple random walk, we know that for large $n$
\begin{equation}
P\{S_n^{(1)}=S_n^{(2)}\}= \frac{1}{\sqrt{\pi n}}+o(n^{-1/2}). 
\end{equation}
So that we can get from \eqref{Lapl} that when $x$ is close to zero
\begin{equation}
 \log P^{\otimes 2} \exp(- x \tau_1)=  -\sqrt{x} +o(\sqrt{x}).
\end{equation}
We fix $x_0$ such that $\log P \exp(- x \tau_1)\le \sqrt x /2$ for all $x\le x_0$.
For any $k\le n$ we have

\begin{equation} \begin{split}
 P^{\otimes 2}\{|\tau\cap [1,n-1]|\ge k\}=P^{\otimes 2}\{\tau_k\le n-1\}&\le \exp((n-1)x) P^{\otimes 2} \exp (-\tau_k x)\\
&\le \exp\left[nx+k \log P^{\otimes 2}\exp(-x\tau_1)\right]. \end{split}
\end{equation}
For any $k\le \left\lfloor 4 n\sqrt{x_0}\right\rfloor=k_0$ one can choose $x=(k/4 n)^2\le x_0$ in the above and use the definition of $x_0$ to get that
 
\begin{equation}
  P^{\otimes 2}\{|\tau\cap [1,n-1]|\ge k\}\le \exp\left(-k^2/(32  n)\right).
\end{equation}
In the case where $k> k_0$ we simply bound the quantity by

\begin{equation}
   P^{\otimes 2}\{|\tau\cap [1,n-1]|\ge k\}\le \exp\left(k_0^2/(32 n)\right)\le \exp\left(-n x_0/4\right).
\end{equation}
By Lemma \eqref{th:lem1}, if $n$ is large enough
\begin{equation}
 P^{\otimes}A_n\ge 1/2 c_{RW}^2n^{-3}.
\end{equation}
A trivial bound on the conditioning gives us

\begin{equation}\label{bounee}\begin{split}
P^{\otimes 2}\left(|\tau\cap [1,n-1]|\ge k \ \big| \ A_n \right)&\le\min\left(1,2c_{RW}^{-2} n^3\exp\left(-k^2/(32 n)\right)\right) \text{ if } k\le k_0,\\
 P^{\otimes 2}\left(|\tau\cap [1,n-1]|\ge k \ \big| \ A_n \right)&\le 2c_{RW}^{-2}n^{3} \exp\left(-n x_0/4\right)
\text{ otherwise}.
\end{split}\end{equation}
We define $k_1:=\lceil 16\pi\sqrt{n\log(2c_{RW}^{-2} n^3)}\rceil$. 
The above implies that for $n$ large enough we have

\begin{equation}\begin{split}
P^{\otimes 2}\left(|\tau\cap [1,n-1]|\ge k \ \big| \ A_n \right)&\le 1 \text{ if } k\le k_1, \\
P^{\otimes 2}\left(|\tau\cap [1,n-1]|\ge k \ \big| \ A_n \right)&\le \exp\left(-k^2/(64 n)\right) \text{ if } k_1\le k\le k_0,\\
 P^{\otimes 2}\left(|\tau\cap [1,n-1]|\ge k \ \big| \ A_n \right)&\le\exp\left(-n x_0/8\right)
\text{ otherwise}.
\end{split}\end{equation}
Now we are ready to bound \eqref{eq:abcd}.
Integration by part gives,

\begin{equation}\begin{split}
 &P^{\otimes 2}\left[\exp\left(\gamma \gb |\tau\cap [1,n-1]|\right)\ \big| \ A_n\right]-1\\
&\quad\quad\quad\quad\quad\quad\quad\quad=\gga(\gb)\int_{0}^{\infty}\exp(\gga(\gb) x)P^{\otimes 2}\left(|\tau\cap [1,n-1]|\ge x \ \big| \ A_n\right)\dd x.
\end{split}\end{equation}
We split the right-hand side in three part corresponding to the three different bounds we have in \eqref{bounee}:
$x\in[0,k_1]$, $x\in[k_1,k_0]$ and $x\in[k_0,n]$. It suffices to show that each part is less than $\gep/3$ to finish the proof. The first part is

\begin{equation}
\gga(\gb)\int_{0}^{k_1}\exp(\gga(\gb) x)P^{\otimes 2}\left(|\tau\cap [1,n-1]|\ge x\ \big| \ A_n\right)\dd x \le \gga(\gb)k_1\exp(\gga(\gb)k_1).
\end{equation}
One uses that $n\le \frac{c_{\gep}}{\gb^4|\log \gb|}$ and $\gga(\gb)=\gb^2+o(\gb^2)$ to get that for $\gb$ small enough and $n$ large enough if $c_\gep$ is well chosen we have
\begin{equation}
 k_1\gga(\gb)\le 100 \gb^2 \sqrt{n\log n}\le \gep/4,
\end{equation}
so that $\gga(\gb)k_1\exp(\gga(\gb)k_1)\le \gep/3$.\\
We use our bound for the second part of the integral to get

\begin{equation}\begin{split}
&\gga(\gb)\int_{k_1}^{k_0}\exp(\gga(\gb) x)P^{\otimes 2}\left(|\tau\cap [1,n-1]|\ge x\ \big| \ A_n\right)\dd x\\
&\quad\le \gga(\gb)\int_{0}^{\infty}\exp\left(\gga(\gb)x-x^2/(64 n)\right)\dd x =\int_{0}^{\infty}\exp\left(x-\frac{x^2}{64 n \gga(\gb)^2}\right)\dd x.
\end{split}\end{equation}
Replacing $n$ by its value, we see that the term that goes with $x^2$ in the exponential can be made arbitrarily large, provided that $c_\gep$ is small enough. In particular we can make the left-hand side less than $\gep/3$.\\
Finally, we estimate the last part

\begin{equation}\begin{split}
&\gga(\gb)\int_{k_0}^{n}\exp(\gga(\gb)^2 x)P^{\otimes 2}\left(|\tau\cap [1,n-1]|\ge x \ \big| \ A_n\right)\dd x\\
&\quad\quad\quad\quad\quad\quad\quad\quad \le \gga(\gb)\int_{0}^{n}\exp(\gga(\gb)x-n x_0/8)\dd x=n\exp(-[\gga(\gb)-x_0/8]n).
\end{split}\end{equation}
This is clearly less than $\gep/3$ if $n$ is large and $\gb$ is small.

\end{proof}

\section{Proof of the lower bound of Theorem \ref{gaussi}}\label{dim11}

In this section we use the method of replica-coupling that is used for the disordered pinning model in \cite{F_rc} to derive a lower bound on the free energy. The proof here is an adaptation of the argument used there to prove disorder irrelevance.

The main idea is the following: Let $W_N(\gb)$ denotes the renormalized partition function for inverse temperature $\gb$. A simple Gaussian computation gives

\begin{equation}
 \frac{\dd Q \log W_N(\sqrt{t})}{\dd t}\bigg|_{t=0}=-\frac{1}{2}P^{\otimes 2}\sum_{i=1}^N \ind_{\{S_i^{(1)}=S_i^{(2)}\}}.
\end{equation}
Where $S^{(1)}$ and $S^{(2)}$ are two independent random walk under the law $P^{\otimes 2}$.
This implies that for small values of $\gb$ (by the equality of derivative at $t=0$),

\begin{equation}\label{eqqq}
Q \log W_N(\gb) \approx - \log P^{\otimes 2} \exp\left(\gb^2/2\sum_{i=1}^N \ind_{\{S_N^{(1)}=S_N^{(2)}\}}\right).
\end{equation}
This tends to make us believe that

\begin{equation}
 p(\gb)=-\lim_{N\rightarrow \infty} \log P^{\otimes 2} \exp\left(\gb^2/2\sum_{i=1}^N \ind_{\{S_N^{(1)}=S_N^{(2)}\}}\right).
\end{equation}
However, things are not that simple because \eqref{eqqq} is only valid for fixed $N$, and one needs some more work to get something valid when $N$ tends to infinity.
The proofs aims to use convexity argument and simple inequalities to be able to get the inequality

\begin{equation}
 p(\gb)\ge-\lim_{N\rightarrow \infty} \log P^{\otimes 2} \exp\left(2 \gb^2\sum_{i=1}^N \ind_{\{S_N^{(1)}=S_N^{(2)}\}}\right).
\end{equation}
The fact that convexity is used in a crucial way make it quite hopeless to get the other inequality using this method.

\begin{proof}
Let use define for $\gb$ fixed and $t\in[0,1]$

\begin{equation}
\Phi_N(t,\gb):=\frac{1}{N}Q\log P \exp\left(\sum_{i=1}^N \left[\sqrt{t}\gb \go_{i,S_i}-\frac{t\gb^2}{2}\right]\right),
 \end{equation}
and for $\gl\ge 0$
\begin{equation}
\Psi_N(t,\gl,\gb):=\frac{1}{2N}Q\log P^{\otimes 2} \exp\left(\sum_{i=1}^N\left[\sqrt{t}\gb(\go_{i,S_i^{(1)}}+\go_{i,S_i^{(2)}})-t\gb^2+\gl\gb^2\ind_{\{S_i^{(1)}=S_i^{(2)}\}}\right]\right).
\end{equation}
One can notice that $\Phi_N(0,\gb)=0$ and $\Phi_N(1,\gb)=p_N(\gb)$ (recall the definition of $p_N$ \eqref{Pn}), so that  $\Phi_N$ is an interpolation function.
Via the Gaussian integration by par formula

\begin{equation}
 Q \go f(\go)=Q f'(\go),
\end{equation}
valid (if $\go$ is a centered standard Gaussian variable) for every differentiable functions such that $\lim_{|x|\to\infty}\exp(-x^2/2)f(x)=0$, one finds

\begin{equation}\begin{split}\label{fifi}
 \frac{\dd }{\dd t}\Phi_N(t,\gb)&=-\frac{\gb^2}{2N}\sum_{j=1}^{N}\sum_{z\in \Z} Q \left(\frac{ P \exp\left(\sum_{i=1}^N \left[\sqrt{t}\gb\go_{i,S_i}-\frac{t\gb^2}{2}\right]\right)\ind_{\left\{S_j=z\right\}}}{P\exp\left(\sum_{i=1}^N \left[\sqrt{t}\gb\go_{i,S_i}-\frac{t\gb^2}{2}\right]\right)}\right)^2 \\
&=-\frac{\gb^2}{2N}Q \left(\mu^{(\sqrt{t}\gb)}_n\right)^{\otimes 2} \left[\sum_{i=1}^N \ind_{\{S_i^{(1)}=S_i^{(2)}\}}\right].
\end{split}
\end{equation}
This is (up to the negative multiplicative constant $-\gb^2/2$) the expected overlap fraction of two independent replicas of the random--walk under the polymer measure for the inverse temperature $\sqrt{t}\gb$. This result has been using It\^o formula in \cite[Section 7]{CH_ptrf}.\\
For notational convenience, we define

\begin{equation}
H_N(t,\gl,S^{(1)},S^{(2)})=\sum_{i=1}^N\left[\sqrt{t}\gb(\go_{i,S_i^{(1)}}+\go_{i,S_i^{(2)}})-t\gb^2+\gl\gb^2\ind_     {\left\{S_i^{(1)}=S_i^{(2)}\right\}}\right].
\end{equation}
We use Gaussian integration by part again, for $\Psi_N$: 

\begin{multline}\label{psipsi}
\frac{\dd}{\dd t}\Psi_{N}(t,\gl,\gb)= \frac{\gb^2}{2N}\sum_{j=1}^N Q \frac{ P^{\otimes 2} \exp\left(H_N(t,\gl,S^{(1)},S^{(2)})\right)\ind_{\{S_j^{(1)}=S_j^{(2)}\}}}{P^{\otimes 2} \exp\left(H_N(t,\gl,S^{(1)},S^{(2)})\right)}\\
- \frac{\gb^2}{4N}\sum_{j=1}^N\sum_{z\in \Z} Q \left(\frac{ P^{\otimes 2} \left(\ind_{\{S_j^{(1)}=z\}}+\ind_{\{S_j^{(2)}=z\}}\right)\exp\left(H_N(t,\gl,S^{(1)},S^{(2)})\right)}{ P^{\otimes 2} \exp\left(H_N(t,\gl,S^{(1)},S^{(2)})\right)}\right)^2 \\
\le \frac{\gb^2}{2N}\sum_{j=1}^N Q \frac{ P^{\otimes 2} \exp\left(H_N(t,\gl,S^{(1)},S^{(2)})\right)\ind_{\{S_j^{(1)}=S_j^{(2)}\}}}{P^{\otimes 2} \exp\left(H_N(t,\gl,S^{(1)},S^{(2)})\right)}=\frac{\dd}{\dd\gl} \Psi_N(t,\gl,\gb).
\end{multline}
The above implies that for every $t\in[0,1]$ and $\gl\ge 0$

\begin{equation}\label{hmhm}
 \Psi_N(t,\gl,\gb)\le \Psi_N(0,\gl+t,\gb).
\end{equation}
Comparing \eqref{fifi} and \eqref{psipsi}, and using convexity and monotonicity of $\Psi_N(t,\gl,\gb)$ with respect to $\gl$, and the fact that $\Psi_N(t,0,\gb)=\Phi_N(t,\gb)$ one gets

\begin{multline}
                  -\frac{\dd}{\dd t}\phi_N(t,\gb)=\frac{\dd}{\dd \gl}\Psi_N(t,\gl,\gb)\bigg|_{\gl=0}\\
\le \frac{\Psi_N(t,2-t,\gb)-\Phi_N(t,\gb)}{2-t}\le \Psi_N(0,2,\gb)-\Phi_N(t,\gb),
                \end{multline}
where in the last inequality we used $(2-t)\ge 1$ and \eqref{hmhm}.
Integrating this inequality between $0$ and $1$ and recalling $\Phi_N(1,\gb)=p_N(\gb)$ we get

\begin{equation}
 p_N(\gb)\ge (1-e)\Psi_N(0,2,\gb).
\end{equation}

On the right-hand side of the above we recognize something related to pinning models. More precisely

\begin{equation}\label{alm}
 \Psi_N(0,2,\gb)=\frac{1}{2N}\log Y_N,
\end{equation}
where 
\begin{equation}
 Y_N=P^{\otimes 2} \exp\left(2\gb^2 \sum_{i=1}^N \ind_{\left\{S_N^{(1)}=S_N^{(2)}\right\}}\right)
\end{equation}
is the partition function of a homogeneous pinning system of size $N$ and parameter $2\gb^2$ with underlying renewal process the sets of zero of the random walk $S^{(2)}-S^{(1)}$. This is a well known result in the study of pinning model ( we refer to \cite[Section 1.2]{Book} for an overview and the results we cite here) that 

\begin{equation}
 \lim_{N\rightarrow\infty} \frac{1}{N}\log Y_N=\tf(2\gb^2),
\end{equation}
where $\tf$ denotes the free energy of the pinning model. Moreover, it is also stated 
\begin{equation}
 \tf(h)\stackrel{h\to 0+}{\sim} h^2/2.
\end{equation}
Then passing to the limit in \eqref{alm} ends the proof of the result for any constant strictly bigger that $4$. 
\end{proof}

\section{Proof the lower bound in Theorem \ref{1+2uplb}}\label{dim12}

The technique used in the two previous sections could be adapted here to prove the results but in fact it is not necessary.
Because of the nature of the bound we want to prove in dimension $2$ (we do not really track the best possible constant in the exponential), it will be sufficient here to control the variance of $W_n$ up to some value, and then the concentration properties of $\log W_n$ to get the result. The reader can check than using the same method in dimension $1$ does not give the right power of $\gb$.

First we prove a technical result to control the variance of $W_n$ which is the analog of \eqref{th:lem2} in dimension $1$. Recall that $\gga(\gb):=\gl(2\gb)-2\gl(\gb)$ with $\gl(\gb):=\log Q \exp (\gb\go_{(1,0)})$.

\begin{lemma}\label{th:lemmvar}
For any $\gep<0$, one can find a constant $c_\gep>0$ and $\gb_0>0$ such that for any $\gb\le \gb_0$, for any 
$n\le \exp\left(c_{\gep}/\gb^2\right)$ we have
\begin{equation}
\var_Q W_n\le \gep.
\end{equation}
\end{lemma}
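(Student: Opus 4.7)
The starting point is the familiar variance formula for the partition function. Using the independence of the $\go_{(i,z)}$ together with the definition of $\gl$, a direct computation yields
\begin{equation}
\var_Q W_n + 1 = Q[W_n^2] = P^{\otimes 2}\left[\exp\left(\gga(\gb)\sum_{i=1}^n\ind_{\{S_i^{(1)}=S_i^{(2)}\}}\right)\right],
\end{equation}
where $S^{(1)},S^{(2)}$ are independent simple random walks on $\Z^2$ under $P^{\otimes 2}$, and $\gga(\gb)=\gl(2\gb)-2\gl(\gb)=\gb^2+o(\gb^2)$ as $\gb\to 0$ (since $\go$ has unit variance). The problem therefore reduces to bounding $P^{\otimes 2}[\exp(\gga(\gb)L_n)]$, where $L_n:=\sum_{i=1}^n\ind_{\{S_i^{(1)}=S_i^{(2)}\}}$ counts the meetings of the two walks up to time $n$.

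To control this exponential moment, I would use the renewal structure of the meeting times. Set $u_j:=P(S_j^{(1)}=S_j^{(2)})$ and $U_n:=\sum_{j=1}^n u_j$. By the Markov property and translation invariance of the random walk,
\begin{equation}
P^{\otimes 2}\bigl(S_{i_1}^{(1)}=S_{i_1}^{(2)},\ldots,S_{i_k}^{(1)}=S_{i_k}^{(2)}\bigr)=u_{i_1}\,u_{i_2-i_1}\cdots u_{i_k-i_{k-1}}
\end{equation}
for $0<i_1<\cdots<i_k$, so that $P^{\otimes 2}[L_n^k]/k!=\sum_{0<i_1<\cdots<i_k\le n}u_{i_1}u_{i_2-i_1}\cdots u_{i_k-i_{k-1}}$; a change of variables to the increments immediately gives the bound $P^{\otimes 2}[L_n^k]/k!\le U_n^k$. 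Summing the Taylor series for the exponential then yields the crude but sufficient geometric estimate
\begin{equation}
P^{\otimes 2}\left[e^{\gga(\gb) L_n}\right]\le \frac{1}{1-\gga(\gb)\,U_n},
\end{equation}
valid whenever $\gga(\gb)\,U_n<1$.

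The last ingredient is the planar local central limit theorem: since $u_j=\sum_{z\in\Z^2}P(S_j=z)^2$ and $P(S_j=z)\le C/j$ uniformly in $z$, one obtains $u_j\le C/j$ and hence $U_n\le C'\log n$ for some universal $C'$. Combining, $\gga(\gb)U_n\le 2\gb^2\cdot C'\log n\le 2C'c_\gep$ as soon as $\gb$ is small and $n\le \exp(c_\gep/\gb^2)$; choosing $c_\gep$ small enough to make this at most $\gep/(1+\gep)$ gives $\var_Q W_n\le \gga(\gb)U_n/(1-\gga(\gb)U_n)\le \gep$. The proof contains no real obstacle: the key conceptual observation is that the variance of $W_n$ is \emph{exactly} an exponential moment of the collision count of two independent planar walks, and the logarithmic growth $U_n\asymp \log n$ specific to $d=2$ is the reason why $n$ may be taken up to $\exp(c/\gb^2)$ — in $d=1$ the analogous $U_n\asymp\sqrt n$ would force a much more restrictive range, which is why the $d=1$ lower bound required the more delicate conditioned argument of Lemma \ref{th:lem2}.
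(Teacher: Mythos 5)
Your approach is genuinely different from the paper's, and once a small but real technical slip is fixed it is correct and considerably more elementary. The paper controls $P^{\otimes 2}\exp(\gga(\gb)L_n)$ via a Chernoff-type bound on the tail $P^{\otimes 2}\{|\tau\cap[1,n]|\ge k\}$, derived from the Laplace-transform identity $1-P^{\otimes 2}e^{-x\tau_1}=(\sum_j e^{-xj}u_j)^{-1}$ together with the local CLT, followed by an integration by parts with a three-way split of the range. You instead bound the exponential moment directly by a geometric series in the expected collision count $U_n=\sum_{j\le n}u_j$, using only the crude Green's-function estimate $U_n\le C'\log n$; that logarithmic growth is exactly what makes the range $n\le\exp(c_\gep/\gb^2)$ accessible. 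Your route avoids the Laplace inversion and the integration-by-parts bookkeeping entirely.

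The slip is the moment identity: it is \emph{not} true that $P^{\otimes 2}[L_n^k]/k!=\sum_{0<i_1<\cdots<i_k\le n}u_{i_1}u_{i_2-i_1}\cdots u_{i_k-i_{k-1}}$. Because $L_n$ is a sum of $\{0,1\}$-valued indicators, the correlation sum on the right is the $k$-th \emph{binomial} moment,
\begin{equation}
P^{\otimes 2}\binom{L_n}{k}=\sum_{0<i_1<\cdots<i_k\le n}u_{i_1}u_{i_2-i_1}\cdots u_{i_k-i_{k-1}}\le U_n^{\,k},
\end{equation}
whereas $P^{\otimes 2}[L_n^k]/k!$ can exceed $U_n^k$ (already for $L_n=\ind_{\{S_1^{(1)}=S_1^{(2)}\}}$ and $k\ge 2$: there $P^{\otimes 2}[L_n^k]/k!=u_1/k!$ while $U_n^k=u_1^k$). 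To get your geometric estimate one should instead write $e^{\gga(\gb)L_n}=(1+q)^{L_n}$ with $q:=e^{\gga(\gb)}-1$ and expand
\begin{equation}
P^{\otimes 2}\bigl[(1+q)^{L_n}\bigr]=\sum_{k\ge 0}q^{k}\,P^{\otimes 2}\binom{L_n}{k}\le \frac{1}{1-qU_n}\qquad\text{whenever } qU_n<1.
\end{equation}
Since $q\sim\gga(\gb)\sim\gb^2$ as $\gb\to 0$, this only shifts your constants: with $u_j\le C/j$ from the planar LCLT one still has $U_n\le C'\log n$, hence $qU_n\le 2C'c_\gep$ for $\gb$ small and $n\le\exp(c_\gep/\gb^2)$, and choosing $c_\gep$ small enough finishes the argument exactly as you describe.
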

\begin{proof}
A straight--forward computation shows that the the variance of $W_n$ is given by

\begin{equation}\label{var2d}
\var_Q W_n= P^{\otimes 2} \exp\left(\gga(\gb)\sum_{i=1}^n \ind_{\{S_i^{(1)}=S_i^{(2)}\}}\right)-1.
\end{equation}
where $S^{(i)}$, $i=1,2$ are two independent $2$--dimensional random walks.

As the above quantity is increasing in $n$, it will be enough to prove the result for $n$ large. For technical convenience we choose to prove the result for $n= \rfloor \exp(-c_{\gep}/\gga(\gb))\rfloor$ (recall $\gga(\gb)=\gl(2\gb)-2\gl(\gb)$) which does not change the result since $\gga(\gb)=\gb^2+o(\gb^2)$.

The result we want to prove seems natural since we know that $(\sum_{i=1}^n \ind_{\{S_i^{(1)}=S_i^{(2)}\}})/\log n$ converges to an exponential variable (see e.g.\ \cite{GZ}), and $\gga(\gb)\sim c_{\gep}\log n$. However, convergence of the right--hand side of \eqref{var2d} requires the use of the dominated convergence Theorem, and the proof of the domination hypothesis is not straightforward. It could be extracted from the proof of the large deviation result in \cite{GZ}, however
we include a full proof of convergence here for the sake of completeness.

We define $\tau=\left(\tau_k\right)_{k\ge 0}=\{S^{(1)}_i=S^{(2)}_i,i\ge 0\}$  the set where the walks meet (it can be written as an increasing sequence). By the Markov property, the random variables $\tau_{k+1}-\tau_k$ are i.i.d.\ . 

\medskip

To prove the result, we compute bounds on the probability of having too many point before $n$ in the renewal $\tau$. As in the $1$ dimensional case, we use Laplace transform to do so.
From \cite[p. 577]{Feller} , we know that

\begin{equation}\label{Lapl1}
1- P^{\otimes 2} \exp(- x \tau_1)= \frac{1}{\sum_{n\in \N} \exp(-xn)P\{S_n^{(1)}=S_n^{(2)}\}}.
\end{equation}
The local central limit theorem says that for large $n$

\begin{equation}
  P^{\otimes 2}\{S_n^{(1)}=S_n^{(2)}\}\sim \frac{1}{\pi n}.
\end{equation}
Using this into \eqref{Lapl1} we get that when $x$ is close to zero

\begin{equation}
\log P^{\otimes 2} \exp(- x \tau_1)\sim -\frac{\pi}{|\log x|}.
\end{equation}
We use the following estimate 

\begin{equation} \label{coucou}\begin{split}
 P^{\otimes 2}\{|\tau\cap [1,n]|\ge k\}=P^{\otimes 2}\{\tau_k\le n\}&\le \exp(nx) P^{\otimes 2} \exp (-\tau_k x)\\
&= \exp\left[nx+k\log  P^{\otimes 2} \exp(-x\tau_1)\right]. \end{split}
\end{equation}
Let $x_0$ be such that for any $x\le x_0$, $\log P^{\otimes 2} \exp(- x \tau_1)\ge -3/ |\log x|$.
For $k$ such that $k/(n\log(n/k))\le x_0$, we replace $x$ by $k/(n\log(n/k))$ in \eqref{coucou}
to get 
\begin{equation}
  P^{\otimes 2}\{|\tau\cap [1,n]|\ge k\}\le \exp \left(\frac{k}{\log(n/k)}-\frac{3 k}{\log\left[k/(n\log n/k)\right]}\right)\le \exp\left(-\frac{k}{\log(n/k)}\right),
\end{equation}
where the last inequality holds if $k/n$ is small enough. We fix $k_0=\gd n$ for some small $\gd$.
We get that
\begin{equation}\label{2db}\begin{split}
   P^{\otimes 2}\{|\tau\cap [1,n]|\ge k\}&\le \exp\left(-\frac{k}{\log(n/k)}\right) \quad \text{if } k\le k_0\\
   P^{\otimes 2}\{|\tau\cap [1,n]|\ge k\}&\le \exp\left(-\frac{k_0}{\log(n/k_0)}\right)=\exp\left(-\frac{\gd n}{\log (1/\gd)}\right) \quad \text{if } k\ge k_0.
\end{split}
\end{equation}
We are ready to bound \eqref{var2d}. We remark that using integration by part we obtain

\begin{equation}
P \exp\left(\gga(\gb) |\tau\cap [1,n]|\right)-1= \int_{0}^n \gga(\gb)\exp(\gga(\gb)x)P^{\otimes 2}(\tau\cap [1,n]|\ge x)\dd x.
\end{equation}
To bound the right--hand side, we use the bounds we have concerning $\tau$: \eqref{2db}. We have to split the integral in three parts. 
\medskip

The integral between $0$ and $1$ can easily be made less than $\gep/3$ by choosing $\gb$ small.

\medskip
Using $n\le \exp(c_{\gep}/\gga(\gb))$, we get that

\begin{multline}
 \int_{1}^{\gd n} \gga(\gb)\exp(\gga(\gb)x)P^{\otimes 2}(\tau\cap [1,n]|\ge x)\dd x\le \int_{1}^{\gd n}\gga(\gb)\exp\left(\gga(\gb)x-\frac{x}{\log(n/x)}\right)\dd x\\
\le \int_{1}^{\gd n}\gga(\gb)\exp\left(\gga(\gb)x-\frac{\gga(\gb)\gb x}{c_{\gep}}\right)\le \frac{c_{\gep}}{1-c_{\gep}}.
\end{multline}
This is less that $\gep/3$ if $c_{\gep}$ is chosen appropriately.
The last part to bound is
\begin{equation}
 \int_{\gd n}^n \gga(\gb)\exp(\gga(\gb)x)P^{\otimes 2}(\tau\cap [1,n]|\ge x)\le n\gga(\gb)\exp\left(\gga(\gb)n-\frac{\gd n}{\log 1/\gd}\right)\le \gep/3,
\end{equation}
where the last inequality holds if $n$ is large enough, and $\gb$ is small enough.
 
\end{proof}

\begin{proof}[Proof of the lower bound in Theorem \ref{1+2uplb}]

By a martingale method that one can find a constant $c_9$ such that
\begin{equation}
\var_Q{\log W_n}\le C_9 n, \quad \quad \forall n\ge 0, \forall \gb\le 1.
\end{equation}
(See \cite[Proposition 2.5]{CSY} and its proof for more details).\\
 Therefore Chebyshev inequality gives
\begin{equation}\label{eq:ineq}
Q\left\{\left|\frac 1 n \log W_n - \frac 1 n Q\log W_n\right|\ge n^{-1/4}\right\}\le C_9 n^{-1/2}.
\end{equation}
Using  Lemma \ref{th:lemmvar} and Chebyshev inequality again, we can find a constant $C_{10}$ such that for small $\gb$ and $n=\lceil\exp(C_{10}/\gb^2)\rceil$ we have
\begin{equation}
Q\left\{ W_n<1/2\right\}\le 1/2.
\end{equation}
This combined with \eqref{eq:ineq} implies that

\begin{equation}
\frac{-\log 2}{n}\le  n^{-1/4} + Q \frac 1 n \log W_n\le n^{-1/4}+p(\gb).
\end{equation}
Replacing $n$ by its value we get

\begin{equation}
 p(\gb)\ge  -n^{-1/4}-\frac{\log 2}{n}\ge -\exp (-C_{10}/5\gb^2).
\end{equation}

\end{proof}

\bigskip
{\bf Acknowledgements:} The author is very grateful to Giambattista Giacomin for numerous suggestions and precious help for the writing of this paper,  to Francesco Caravenna for the proof of Lemma \ref{th:lem1} and to Fabio Toninelli and Francis Comets for enlightening discussions. The author also acknowledges the support of ANR, grant POLINTBIO.

\end{document}